\theoremstyle{plain} %default
\newtheorem{thm}{Theorem}
\newtheorem{cor}{Corollary}
\newtheorem{lemma}{Lemma}
\newtheorem{prop}{Proposition}
\theoremstyle{definition}
\newtheorem{remark}{Remark}
\DeclareMathOperator{\tr}{tr}
\newcommand{\bxi} {\boldsymbol{\xi}}
\newcommand{\bW}{{\bf W}}
\newcommand{\bQ}{{\bf Q}}
\newcommand{\bH}{{\bf H}}
\newcommand{\bR}{{\bf R}}
\newcommand{\bI}{{\bf I}}
\newcommand{\bU}{{\bf U}}
\newcommand{\bV}{{\bf V}}
\newcommand{\bM}{{\bf M}}
\newcommand{\bD}{{\bf D}}
\newcommand{\bP}{{\bf P}}
\newcommand{\bA}{{\bf A}}
\newcommand{\bB}{{\bf B}}
\newcommand{\bSigma} {\boldsymbol{\Sigma}}
\newcommand{\bu}{{\bf u}}
\newcommand{\bx}{{\bf x}}
\newcommand{\bLam}{\boldsymbol{\Lambda}}
\newcommand{\blam}{\boldsymbol{\lambda}}
\newcommand{\lam}{\lambda}
\def\bal#1\eal{\begin{align}#1\end{align}}
\newcommand{\bp} {\begin{proof}}
\newcommand{\ep} {\end{proof}}
\newcommand{\Lb} {\left(}
\newcommand{{\Rb}} {\right)}
\newcommand{{\Rf}} {\right\}}
\newcommand{\sN}{\mathcal{N}}
\newcommand{\sR}{\mathcal{R}}
\newcommand{\bi}{\begin{itemize}}
\newcommand{\ei}{\end{itemize}}
\begin{document}

%\newcommand{\yellow}[1]{\colorbox{yellow}{#1}}
%\newcommand{\yellow}[1]{{#1}}
%----------------------------------------------------------------------
% Title Information, Abstract and Keywords
%----------------------------------------------------------------------
\title{Rank-Deficient Solutions for Optimal Signaling over Wiretap MIMO Channels}

%% author names and affiliations
% use a multiple column layout for up to three different
% affiliations
\author{Sergey Loyka, Charalambos D. Charalambous
\vspace*{-1.5\baselineskip}

\thanks{This paper was presented in part at IEEE Int. Symp. Inf. Theory (ISIT-13), Istanbul, Turkey, Jul. 2013, and at ISIT-14, Honolulu, HI, USA, Jun. 2014}

\thanks{S. Loyka is with the School of Electrical Engineering and Computer Science, University of Ottawa, Ontario,
Canada, K1N 6N5, e-mail: sergey.loyka@uottawa.ca}

\thanks{C.D. Charalambous is with the ECE Department, University of Cyprus, 75 Kallipoleos Avenue, P.O. Box 20537, Nicosia, 1678, Cyprus, e-mail: chadcha@ucy.ac.cy}
}
%
% make the title
\maketitle
%
% do the abstract

\begin{abstract}
Capacity-achieving signaling strategies for the Gaussian wiretap MIMO channel are investigated without the degradedness assumption. In addition to known solutions, a number of new rank-deficient solutions for the optimal transmit covariance matrix are obtained. The case of a weak eavesdropper is considered in detail and the optimal covariance is established in an explicit, closed form with no extra assumptions. This provides lower and upper bounds to the secrecy capacity in the general case with a bounded gap, which are tight for a weak eavesdropper or/and low SNR. Closed form solutions are also obtained for isotropic and omnidirectional eavesdroppers, based on which lower and upper bounds to the secrecy capacity are established in the general case.
Sufficient and necessary conditions for optimality of 3 popular transmission techniques, namely the zero-forcing (ZF), the standard water-filling (WF) over the channel eigenmodes and the isotropic signaling (IS), are established for the MIMO wiretap channel. These solutions are appealing due to their lower complexity. In particular, no wiretap codes are needed for the ZF transmission, and no precoding or feedback is needed for the isotropic signaling.
\end{abstract}

\begin{IEEEkeywords}
MIMO, wiretap channel, secrecy capacity, optimal signalling.
\end{IEEEkeywords}

%
% start the main text ...
%\vspace*{-1\baselineskip}
\section{Introduction}
\label{sec:introduction}

Widespread use of wireless systems on one hand and their broadcast nature on the other have initiated significant interest in their security. Information-theoretic studies of the secrecy aspects of wireless systems have recently attracted significant interest \cite{Bloch}. Due to the high spectral efficiency of wireless MIMO systems and their wide adoption by the academia and industry, the Gaussian MIMO wire-tap channel (WTC) has emerged as a popular model and a number of results have been obtained for this model, including the proof of optimality of the Gaussian signaling \cite{Bloch}-\cite{Oggier}.

An optimal transmit covariance matrix under the total power constraint has been obtained for some special cases (low/high SNR, MISO channels, full-rank or rank-1 solutions) \cite{Khisti-1}-\cite{Li-09}, but the general case remains elusive. The main difficulty lies in the fact that, unlike the regular MIMO channel, the underlying optimization problem for the MIMO-WTC is generally not convex. It was conjectured in \cite{Oggier} and proved in \cite{Khisti-2} using an indirect approach (via a degraded channel) that the optimal signaling is on the positive directions of the difference channel.  A direct proof (based on the necessary Karush-Kuhn-Tucker (KKT) optimality conditions) has been obtained in \cite{Loyka}, while the optimality of signaling on non-negative directions has been established in \cite{Li-09} via an indirect approach. Closed form solutions for MISO and rank-1 MIMO channels have been obtained in \cite{Khisti-1}\cite{Loyka}-\cite{Li-10a}. The 2-2-1 channel (2 transmit, 2 receive, 1 eavesdropper antenna) has been studied earlier in \cite{Shafiee-09}. The low-SNR regime has been studied in detail in \cite{Gursoy}. An exact full-rank solution for the optimal covariance and several of its properties have been obtained in \cite{Loyka}. In particular, unlike the regular channel (no eavesdropper), the optimal power allocation does not converge to the uniform one at high SNR and the latter remains sub-optimal at any finite SNR. In the case of a weak eavesdropper, the optimal signaling mimics the conventional one (water-filling over the channel eigenmodes) with an adjustment for the eavesdropper channel.

Finally, while no analytical solution for the optimal covariance is known in the general case, numerical algorithms have been developed to attack the problem in \cite{Li-13}-\cite{Alvarado}, which however suffer from the lack of provable global convergence due to the non-convex nature of the optimization problem in the general case. A globally-convergent numerical algorithm for the general case, which is based on an equivalent min-max reformulation of the original problem, was proposed in  \cite{Loyka-15} and its convergence was proved, which takes only a moderate or small number of steps in practice.

The present paper extends the known analytical results for the optimal covariance in several directions. First, motivated by a scenario where the legitimate receiver (Rx) is closer to the transmitter (Tx) than the eavesdropper, the case of a weak eavesdropper is studied and its optimal covariance is obtained in an explicit closed form without any extra assumptions in Section \ref{sec:Weak Eavesdropper}. It provides novel lower and upper bounds to the secrecy capacity in the general case with a bounded gap, which are tight when the eavesdropper is weak or/and the SNR is low and hence serve as an approximation to the true capacity. It also captures the capacity saturation effect at high SNR observed in \cite{Khisti-2}\cite{Loyka}.  The range of validity of this model is indicated.

The presence of the eavesdropper channel state information (CSI) at the transmitter is in question when the eavesdropper does not cooperate (e.g. to hide its presence). To address this issue, we consider in Section \ref{sec:isotropic} an isotropic eavesdropper model, whereby the Tx does not know the directional properties of the eavesdropper and hence assumes it is isotropic, i.e. the eavesdropper channel gain is the same in all directions. The secrecy capacity as well as an optimal signaling to achieve it and its properties are established in an explicit closed form. This case is shown to be the worst-case MIMO wire-tap channel. Based on this, lower and upper capacity bounds are obtained for the general case, which are achievable by the isotropic eavesdropper. The properties of the optimal power allocation are pointed out.

The case of isotropic eavesdropper above requires the number of its antennas to be not less than the number of Tx antennas (which is necessary for a full-rank eavesdropper channel), which may not be the case in practice. To address this issue, Section \ref{sec:omnidirectional} studies an omnidirectional eavesdropper, which may have a smaller number of antennas (and hence rank-deficient channel) and which has the same gain in any direction of a given subspace. The secrecy capacity and the optimal signaling are established in a closed form.

The case of identical right singular vectors of the Rx and eavesdropper channels is investigated and the optimal covariance is established in a closed from in Section \ref{sec:Identical Right Singular Vectors}. This case is motivated by a scenario where the legitimate receiver and the eavesdropper are spatially separated so that each has its own set of local scatterers inducing its own left singular vectors (SV), while both channel are subject to the same set of scatterers around the transmitter (e.g. a base station) and hence the same right SVs. This is similar to the popular Kronecker MIMO channel correlation model, see e.g. \cite{Kermoal}, where the overall channel correlation is a product of the independent Tx and Rx parts, which are induced by the respective sets of scatterers.

In Section \ref{sec:ZF}, the conditions for optimality of popular zero-forcing (ZF) signaling are established, whereby the Tx antenna array forms a null in the eavesdropper direction. Under those conditions, the standard eigenmode signaling and the water-filling (WF) power allocation on what remains of the required channel (after the ZF) are optimal. Furthermore, no wiretap codes are required as regular coding on the required channel suffices, so that the secrecy requirement imposes no extra complexity penalty (beyond the standard ZF). In this case, the optimal secure signaling is decomposed into two parts: part 1 is the ZF (null forming in the terminology of antenna array literature \cite{VanTrees}), which ensures the secrecy requirement, and part 2 is the standard signaling (eigenmode transmission, WF power allocation and coding) on the required channel, which maximizes the rate of required transmission. This is reminiscent of the classical source-channel coding separation \cite{Cover book}.

In Sections \ref{sec:WF} and \ref{sec:isotropic R}, we consider two other popular signaling techniques: the standard water-filling over the eigenmodes of the legitimate channel and the isotropic signaling (IS, whereby the covariance matrix is a scaled identity) and establish sufficient and necessary conditions under which they are optimal for the MIMO WTC. These techniques are also appealing due to a number of reasons. While the standard WF does require wiretap codes, standard solutions can be used for power allocation and eigenmode transmission (i.e. spatial modulation); the isotropic signaling is appealing due to its low complexity: no eavesdropper CSI is required at the transmitter as independent, identically distributed data streams are launched by each antenna. The set of channels for which the isotropic signaling is optimal is fully characterized in Section \ref{sec:isotropic R}. It turns out to be much richer than that of the conventional (no eavesdropper) MIMO channel.

\emph{Notations}: Lower case bold letters denote vectors while bold capitals denote matrices. $\lambda_i(\bW)$ denotes the eigenvalues of a matrix $\bW$ in decreasing order unless indicated otherwise; $(x)_+ = \max\{x,0\}$ for a scalar $x$; $\sN(\bW)$ and $\sR(\bW)$ are the null space and the range of a matrix $\bW$; $(\bW)_+$ denotes the positive eigenmodes of a Hermitian matrix $\bW$:
\bal
(\bW)_+ = \sum_{i: \lambda_i(\bW) >0} \lambda_i \bu_i \bu_i^{\dag}
\eal
where $\bu_i$ is $i$-th eigenvector of $\bW$; $\tr\bW$ and $|\bW|$ denote the trace and the determinant of $\bW$; $\bW^{\dag}$ is the Hermitian conjugation of $\bW$.

\vspace*{-0.5\baselineskip}
\section{Wire-Tap Gaussian MIMO Channel Model}
Let us consider the standard wire-tap Gaussian MIMO channel model,
\begin{equation}
\label{eq1}
{\rm {\bf y}}_1 ={\rm {\bf H}}_1 {\rm {\bf x}}+\bxi_1 , \quad {\rm {\bf y}}_2 ={\rm {\bf H}}_2 {\rm {\bf x}}+ \bxi_2
\end{equation}
where ${\rm {\bf x}}=[x_1 ,x_2 ,...x_m ]^T\in \mathbb{C}^{m,1}$ is the transmitted complex-valued signal vector of dimension $m \times 1$, ``T'' denotes transposition, ${\rm {\bf y}}_k \in \mathbb{C}^{n_k}$, $k=1,2$, are the received vectors at the receiver and eavesdropper, $\bxi_1$ and $\bxi_2$ are the circularly-symmetric additive white Gaussian noise at the receiver and eavesdropper (normalized to unit variance in each dimension), ${\rm {\bf H}}_k \in \mathbb{C}^{{n_k},m}$ is the $n_k \times m$ matrix of the complex channel gains between each Tx and each receive (eavesdropper) antenna, $n_1$, $n_2$ and $m$ are the numbers of Rx, eavesdropper and Tx antennas respectively. The channels ${\rm {\bf H}}_k$ are assumed to be quasistatic (i.e., constant for a sufficiently long period of time so that the infinite horizon information theory assumption holds) and frequency-flat, with full channel state information (CSI) at the Rx and Tx ends.

For a given transmit covariance matrix ${\rm {\bf R}}=E\left\{ {{\rm {\bf xx}}^{\dag}} \right\}$, where $E\left\{ \cdot \right\}$ is the statistical expectation, the maximum achievable secrecy rate between the Tx and Rx (so that the rate between the Tx and the eavesdropper is zero) is \cite{Khisti-2}\cite{Oggier}
\begin{equation}
\label{eq2}
C({\rm {\bf R}})=\ln \frac{\left| {{\rm {\bf I}}+{\rm {\bf W}}_1 {\rm {\bf R}}} \right|}{\left| {{\rm {\bf I}}+{\rm {\bf W}}_2 {\rm {\bf R}}} \right|}=C_1 ({\rm {\bf R}})-C_2 ({\rm {\bf R}})
\end{equation}
where $C_k(\bR)=\ln|\bI +\bW_k\bR|$, $k=1,2$, negative $C({\rm {\bf R}})$ is interpreted as zero rate, ${\rm {\bf W}}_k ={\rm {\bf H}}_k^{\dag} {\rm {\bf H}}_k $, and the secrecy capacity subject to the total Tx power constraint is
\begin{equation}
\label{eq3}
C_s =\mathop {\max }\limits_{{\rm {\bf R}}\ge 0} C({\rm {\bf R}}) \mbox{\ s.t.} \ \tr {\rm {\bf R}}\le P_T
\end{equation}
where $P_T $ is the total transmit power (also the SNR since the noise is normalized). It is well-known that the problem in \eqref{eq3} is not convex in general and explicit solutions for the optimal Tx covariance are not known for the general case, but only for some special cases (e.g. low/high SNR, MISO channels,    full-rank or rank-1 case \cite{Khisti-1}-\cite{Loyka}).

%\vspace*{0.5\baselineskip}
\section{Weak Eavesdropper and Capacity Bounds}
\label{sec:Weak Eavesdropper}

In this section, we obtain novel lower and upper bounds to the secrecy capacity in the general case and show that the bounds are tight when the eavesdropper is weak or if the SNR is low. The weak eavesdropper case is motivated by a scenario where the eavesdropper is located far away from the Tx so that its propagation path loss is large, see e.g. Fig. 2. This is the case when the presence of the eavesdropper does not result in a large capacity loss so that the physical-layer secrecy approach is feasible (while in the case of a strong eavesdropper, the capacity loss is large and other approaches may be preferable, e.g. cryptography).  There is no requirement here for the channel to be degraded or for the optimal covariance to be of full rank or rank 1, so that these results extend the known closed form solutions.

To this end, let
\bal
\label{eq.Cw} \notag
&C_w(\bR) =\ln |\bI + \bW_1 \bR| - \tr (\bW_2 \bR)\\
&C_w = \max_{\bR} C_w(\bR)\\ \notag
&\bR^* = \arg \max_{\bR} C(\bR),\ \bR^*_w = \arg \max_{\bR} C_w(\bR)
\eal
all subject to $\bR \ge 0, \tr\bR \le P_T$, i.e. $\bR^*$ is the optimal covariance and $\bR^*_w$ maximizes $C_w(\bR)$. Using $\ln(1+x)\approx x$ when $0\le x \ll 1$, it can been seen that $C_w(\bR)$ is a weak eavesdropper approximation of $C(\bR)$:
\bal
C(\bR) \approx C_w(\bR)\ \mbox{if}\ \lambda_1(\bW_2\bR) \ll 1
\eal
so that $C_w$ is the weak eavesdropper secrecy capacity. The following Theorem establishes novel secrecy capacity bounds based on $C_w$.
%\vspace*{0.5\baselineskip}
\begin{thm}
The secrecy capacity $C_s$ in \eqref{eq3} for the general Gaussian MIMO-WTC in \eqref{eq1} is bounded as follows:
\bal
\label{eq.T1.1}
C_w \le C(\bR^*_w) \le C_s \le C_w + \frac{P_T^2}{2} \lambda_1^2(\bW_2)
\eal
where
\begin{align}
\label{eq.T1.2}
&\bR^*_w = \bQ^{1/2} (\bI - \widehat{\bW}_1^{-1})_+ \bQ^{1/2}\\
\label{eq.T1.3}
&\widehat{\bW}_1 = \bQ^{1/2} \bW_1 \bQ^{1/2}
\end{align}
and $\bQ$ is the (Moore-Penrose) pseudo-inverse of $\bW_{\lambda}= \lambda\bI + \bW_2$; $\lambda \ge 0$ is found from the total power constraint:
\bal
\label{eq.lambda}
\tr \bR^*_w = P_T \ \mbox{if} \ P_T < P_T^*
\eal
and $\lambda = 0$ otherwise; the threshold power
\bal
\label{eq.T1.5}
P_T^* = \tr \bW_2^{-1}(\bI - \bW_2^{1/2}\bW_1^{-1} \bW_2^{1/2})_+
\eal
if $\bW_2$ is non-singular. When $\bW_2$ is singular, $P_T^* = \infty$ if  $\sN(\bW_2) \nsubseteq \sN(\bW_1)$; otherwise, $\bW_1$ and $\bW_2$ are projected orthogonally to $\sN(\bW_2)$ and the projected matrices are used in \eqref{eq.T1.5}. The weak eavesdropper secrecy capacity can be expressed as
\bal
\label{eq.T1.Cw}
C_w = \sum_{i:\widehat{\lambda}_{1i}>1}\ln\widehat{\lambda}_{1i} - \tr \widehat{\bW}_2(\bI-\widehat{\bW}_1^{-1})_+
\eal
where $\widehat{\lambda}_{1i}=\lambda_i(\widehat{\bW}_1)$, $\widehat{\bW}_2 = \bQ^{1/2} \bW_2\bQ^{1/2}$.
\end{thm}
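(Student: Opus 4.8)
\emph{Overall plan.} I would split the statement into (a) the two outer inequalities of \eqref{eq.T1.1}, which are routine once $C_2(\bR)=\ln|\bI+\bW_2\bR|$ is sandwiched by $x-x^2/2\le\ln(1+x)\le x$ for $x\ge 0$, and (b) the closed forms \eqref{eq.T1.2}--\eqref{eq.T1.5} and \eqref{eq.T1.Cw} for the maximizer of $C_w(\bR)$. The key observation for (b) is that $C_w(\bR)$ is \emph{concave} in $\bR$ --- the sum of the concave $\ln|\bI+\bW_1\bR|$ and the linear $-\tr(\bW_2\bR)$ --- on a compact feasible set, so a maximizer exists and the KKT conditions are necessary and sufficient; the whole task then reduces to solving them explicitly.

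\emph{The bounds.} Since $\bR^*_w$ is feasible in \eqref{eq3}, $C(\bR^*_w)\le C_s$ is immediate, and for $C_w\le C(\bR^*_w)$ I would use $C_2(\bR^*_w)=\sum_i\ln(1+\lambda_i(\bW_2\bR^*_w))\le\sum_i\lambda_i(\bW_2\bR^*_w)=\tr(\bW_2\bR^*_w)$, so that $C(\bR^*_w)=C_1(\bR^*_w)-C_2(\bR^*_w)\ge C_w(\bR^*_w)=C_w$. For the upper bound, with $\bR^*$ attaining $C_s$ and $\gamma_i=\lambda_i(\bW_2\bR^*)$ (these are real and nonnegative, being the eigenvalues of $\bW_2\bR^*$, a product of two positive semidefinite matrices), $\ln(1+\gamma)\ge\gamma-\gamma^2/2$ gives $C_2(\bR^*)\ge\tr(\bW_2\bR^*)-\frac{1}{2}\big(\tr(\bW_2\bR^*)\big)^2$, hence $C_s\le C_w(\bR^*)+\frac{1}{2}\big(\tr(\bW_2\bR^*)\big)^2\le C_w+\frac{1}{2}\big(\lambda_1(\bW_2)P_T\big)^2$, where the last step uses $\tr(\bW_2\bR^*)\le\lambda_1(\bW_2)\tr\bR^*\le\lambda_1(\bW_2)P_T$.

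\emph{The maximizer.} I would dualize the power constraint with a multiplier $\lambda\ge 0$, giving the (still concave) objective $\ln|\bI+\bW_1\bR|-\tr(\bW_{\lambda}\bR)+\lambda P_T$ with $\bW_{\lambda}=\lambda\bI+\bW_2$. When $\bW_{\lambda}$ is nonsingular, the congruence $\bR=\bQ^{1/2}\bS\bQ^{1/2}$ with $\bQ=\bW_{\lambda}^{-1}$ is a bijection of the PSD cone onto itself satisfying $\tr(\bW_{\lambda}\bR)=\tr\bS$, and it converts the problem into the textbook form $\max_{\bS\ge 0}\ \ln|\bI+\widehat{\bW}_1\bS|-\tr\bS$ with $\widehat{\bW}_1=\bQ^{1/2}\bW_1\bQ^{1/2}$, whose solution (an optimal $\bS$ commutes with $\widehat{\bW}_1$, then scalar water-filling $s_i=(1-1/\widehat{\lambda}_{1i})_+$) is $\bS^*=(\bI-\widehat{\bW}_1^{-1})_+$; undoing the congruence yields \eqref{eq.T1.2}. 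The multiplier $\lambda$ is then pinned down by complementary slackness: I would argue that $\tr\bR^*_w$ is continuous and nonincreasing in $\lambda$, vanishes as $\lambda\to\infty$, and at $\lambda=0$ equals $\tr\big[\bW_2^{-1}(\bI-\bW_2^{1/2}\bW_1^{-1}\bW_2^{1/2})_+\big]$ --- using $\widehat{\bW}_1^{-1}=\bW_2^{1/2}\bW_1^{-1}\bW_2^{1/2}$ and cyclicity of the trace --- which is exactly $P_T^*$; so $\lambda>0$ solves $\tr\bR^*_w=P_T$ when $P_T<P_T^*$, and $\lambda=0$ otherwise.

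\emph{Singular $\bW_2$, the value $C_w$, and the main obstacle.} For $\lambda>0$ nothing changes, so only $\lambda=0$ with singular $\bW_2$ is special. If $\sN(\bW_2)\nsubseteq\sN(\bW_1)$, I would exhibit a direction ${\bf v}\in\sN(\bW_2)$ with $\bW_1{\bf v}\ne 0$; putting more power along ${\bf v}$ raises $C_1$ and leaves $\tr(\bW_2\bR)$ untouched, so the constraint stays active for every $P_T$, i.e. $P_T^*=\infty$. If $\sN(\bW_2)\subseteq\sN(\bW_1)$, a block decomposition along $\sR(\bW_2)\oplus\sN(\bW_2)$ shows that replacing $\bR$ by $\bP\bR\bP$ ($\bP$ being the orthogonal projector onto $\sR(\bW_2)$) preserves both $C_1(\bR)$ and $\tr(\bW_2\bR)$ while not increasing $\tr\bR$; hence the optimum may be taken supported on $\sR(\bW_2)$, where $\bW_2$ is nonsingular, and the previous analysis applies with $\bQ=\bW_{\lambda}^{+}$ and the projected matrices entering \eqref{eq.T1.5}. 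Finally, \eqref{eq.T1.Cw} comes from plugging $\bR^*_w$ back in: $\ln|\bI+\bW_1\bR^*_w|=\ln|\bI+\widehat{\bW}_1\bS^*|=\sum_{i:\widehat{\lambda}_{1i}>1}\ln\widehat{\lambda}_{1i}$ (each active $\widehat{\bW}_1$-eigenmode contributes a factor $\widehat{\lambda}_{1i}$), and $\tr(\bW_2\bR^*_w)=\tr(\widehat{\bW}_2\bS^*)$ with $\widehat{\bW}_2=\bQ^{1/2}\bW_2\bQ^{1/2}$ by cyclicity. The routine parts are the $\ln(1+x)$ estimates and the scalar water-filling; the main obstacle I anticipate is the rank-deficient bookkeeping --- proving the optimum is supported on $\sR(\bW_2)$ precisely when $\sN(\bW_2)\subseteq\sN(\bW_1)$ (and that otherwise the power constraint never goes slack), together with the continuity and monotonicity of $\lambda\mapsto\tr\bR^*_w$ needed to legitimize the threshold $P_T^*$ and the choice of $\lambda$.
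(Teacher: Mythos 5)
Your proposal is correct and follows essentially the same route as the paper: the $x-x^2/2\le\ln(1+x)\le x$ sandwich for the bounds in \eqref{eq.T1.1}, the concavity of $C_w(\bR)$ plus the congruence by $\bW_{\lambda}^{-1/2}$ reducing the KKT system to water-filling against $\widehat{\bW}_1$, monotonicity of $\tr\bR^*_w$ in $\lambda$ for the threshold $P_T^*$, and projection onto $\sR(\bW_2)$ (or a free null-space direction giving $P_T^*=\infty$) in the singular case. The only cosmetic difference is in the gap bound, where you apply $\sum_i\gamma_i^2\le\bigl(\sum_i\gamma_i\bigr)^2$ directly to $\gamma_i=\lambda_i(\bW_2\bR^*)$, while the paper first bounds $\lambda_i(\bW_2\bR^*)\le\lambda_1(\bW_2)\lambda_i(\bR^*)$ and then invokes the same scalar lemma; both yield the identical constant $\lambda_1^2(\bW_2)P_T^2/2$.
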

\bp
See the Appendix.
\ep

\begin{remark}
It may appear that \eqref{eq.T1.2} requires $\widehat{\bW}_1$ and thus $\bW_1$ to be positive definite, i.e. singular case is not allowed. This is not so since $(\cdot)_+$ operator eliminates singular eigenmodes of $\widehat{\bW}_1$ so that  $(\bI-\widehat{\bW}_1^{-1})_+$ is well-defined even if $\bW_1$ is singular: one can use $\widehat{\bW}_{1\delta} = \widehat{\bW}_1 +\delta\bI >0$ instead of $\widehat{\bW}_1$, where $\delta>0$, evaluate $(\bI-\widehat{\bW}_{1\delta}^{-1})_+$ and take the limit $\delta \rightarrow 0$ to see that the singular modes of $\widehat{\bW}_1$ are eliminated so that
\bal
(\bI-\widehat{\bW}_{1}^{-1})_+ = \bU_+\bD\bU_+^{\dag}
\eal
where $\bU_+$ is a semi-unitary matrix whose columns are the eigenvectors of $\widehat{\bW}_1$ corresponding to its positive eigenvalues, $\bD$ is a $r\times r$ diagonal matrix whose $i$-th diagonal entry is $(1-\lam_i^{-1}(\widehat{\bW}_{1}))_+$, $i=1...r$, where $r$ is the rank of $\widehat{\bW}_{1}$. The same observation also applies to \eqref{eq.T1.5}.
\end{remark}

\begin{remark}
The 1st inequality in \eqref{eq.T1.1} bounds the sub-optimality gap of using $\bR^*_w$, for which an achievable rate is $C(\bR^*_w)$, instead of the true optimal covariance $\bR^*$:
\bal
|C_s - C(\bR^*_w)|\le \lam_1^2(\bW_2) P_T^2/2
\eal
so that $C(\bR^*_w) \rightarrow C_s$ as $\lam_1(\bW_2) P_T \rightarrow 0$.
\end{remark}

Using Theorem 1, we can now approximate the secrecy capacity via its weak eavesdropper counterpart.

\begin{cor}
\label{cor.1}
The secrecy capacity of the general Gaussian MIMO-WTC can be expressed as follows:
\bal
\label{eq.Cs.Cw.DC}
C_s = C_w +\Delta C
\eal
where $\Delta C$ is the inaccuracy of the weak eavesdropper approximation, which is bounded as
\bal
\label{eq.DC.bounds}
0 \le \Delta C \le \lam_1^2(\bW_2) P_T^2/2
\eal
so that $\Delta C \rightarrow 0$ and $C_s/C_w \rightarrow 1$ as $P_T \rightarrow 0$ or/and $\lam_1(\bW_2) \rightarrow 0$.
\end{cor}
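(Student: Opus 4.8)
The plan is to obtain Corollary \ref{cor.1} as an almost immediate consequence of the inequality chain \eqref{eq.T1.1} in Theorem 1, so that only the ratio claim at low SNR needs any genuine argument. First I would simply \emph{define} $\Delta C = C_s - C_w$, which makes \eqref{eq.Cs.Cw.DC} a tautology; the content of the corollary is therefore the two-sided bound \eqref{eq.DC.bounds}. The left inequality $\Delta C \ge 0$ is read directly from $C_w \le C_s$ in \eqref{eq.T1.1} (indeed from the finer $C_w \le C(\bR^*_w) \le C_s$, which shows the gap is also bounded by the sub-optimality of the explicit covariance $\bR^*_w$). The right inequality $\Delta C \le \lam_1^2(\bW_2)P_T^2/2$ is nothing but the rearrangement of $C_s \le C_w + P_T^2\lam_1^2(\bW_2)/2$ in \eqref{eq.T1.1}. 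So the first two assertions require no work beyond invoking Theorem 1.

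For the limiting statements I would argue by sandwiching. Since $0 \le \Delta C \le \lam_1^2(\bW_2)P_T^2/2$ and the upper bound tends to $0$ whenever $P_T \to 0$ or $\lam_1(\bW_2)\to 0$ (the remaining data held fixed), we get $\Delta C \to 0$, hence $C_s \to C_w$, in both regimes. The only point needing care is $C_s/C_w \to 1$ in the low-SNR regime, where both $C_s$ and $C_w$ vanish and one must therefore control how fast $C_w \to 0$ relative to $\Delta C = O(P_T^2)$. For this I would lower-bound $C_w$ by evaluating $C_w(\bR)$ at the feasible rank-one choice $\bR = P_T\,\bu_1\bu_1^{\dag}$, with $\bu_1$ a unit-norm top eigenvector of $\bW_1$: this yields $C_w \ge \ln(1+P_T\lam_1(\bW_1)) - P_T\,\bu_1^{\dag}\bW_2\bu_1 \ge \ln(1+P_T\lam_1(\bW_1)) - P_T\lam_1(\bW_2)$, which behaves like $P_T(\lam_1(\bW_1)-\lam_1(\bW_2))$ as $P_T\to 0$. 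In the nondegenerate case $\lam_1(\bW_1) > \lam_1(\bW_2)$ (so that the secrecy capacity is nontrivial and $C_w > 0$ for small enough $P_T$) we then get $C_s/C_w = 1 + \Delta C/C_w = 1 + O(P_T^2)/\Theta(P_T) = 1 + O(P_T)\to 1$. In the weak-eavesdropper regime $\lam_1(\bW_2)\to 0$ at fixed $P_T$, $C_w$ converges to the ordinary MIMO capacity $\max_{\bR:\,\tr\bR\le P_T}\ln|\bI+\bW_1\bR|$, a strictly positive constant when $\bW_1\neq 0$, while $\Delta C \to 0$, so $C_s/C_w\to 1$ trivially.

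I expect the main (and only mild) obstacle to be precisely this last ratio argument in the $P_T\to 0$ limit: one must exhibit an explicit feasible covariance giving a matching $\Theta(P_T)$ lower bound on $C_w$, and one should explicitly set aside the degenerate case $\lam_1(\bW_1)\le\lam_1(\bW_2)$, in which $C_s = C_w = 0$ and the ratio is vacuous. Everything else in the corollary is a direct rearrangement of \eqref{eq.T1.1} together with an elementary squeeze.
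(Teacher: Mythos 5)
Your reduction of the corollary to Theorem~1 --- defining $\Delta C = C_s - C_w$, reading both inequalities of \eqref{eq.DC.bounds} off \eqref{eq.T1.1}, and squeezing to get $\Delta C \to 0$ --- matches the paper, and your treatment of the $\lam_1(\bW_2)\to 0$ limit is essentially the paper's continuity argument. The genuine problem is in your $P_T\to 0$ ratio argument: you beamform along the top eigenvector of $\bW_1$ and declare the case $\lam_1(\bW_1)\le\lam_1(\bW_2)$ degenerate with $C_s=C_w=0$. That is false. Take $\bW_1=\mathrm{diag}(3,2)$ and $\bW_2=\mathrm{diag}(4,1)$: then $\lam_1(\bW_1)=3<4=\lam_1(\bW_2)$, yet $\bR=P_T\,\mathrm{diag}(0,1)$ gives $C(\bR)=\ln(1+2P_T)-\ln(1+P_T)>0$, so $C_s=\Theta(P_T)$ and the ratio claim is not vacuous there; meanwhile your lower bound $\ln(1+P_T\lam_1(\bW_1))-P_T\lam_1(\bW_2)\approx -P_T<0$ gives no information. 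The correct nondegeneracy condition is $\lam_1(\bW_1-\bW_2)>0$ (its negation, $\bW_1\le\bW_2$, does force $C_s=C_w=0$), and the correct direction to beamform in is a unit top eigenvector $\bu$ of $\bW_1-\bW_2$: then
\bal \notag
C_w \;\ge\; \ln\bigl(1+P_T\,\bu^{\dag}\bW_1\bu\bigr)-P_T\,\bu^{\dag}\bW_2\bu \;=\; P_T\,\lam_1(\bW_1-\bW_2)+O(P_T^2)\;=\;\Theta(P_T),
\eal
after which your $C_s/C_w = 1+O(P_T^2)/\Theta(P_T)\to 1$ conclusion goes through. This is precisely the expansion the paper invokes, $C_s=P_T\lam_1(\bW_1-\bW_2)+o(P_T)=C_w+o(P_T)$; your idea of exhibiting an explicit feasible covariance is a fine, somewhat more self-contained substitute, but it only works with the eigenvector of the difference matrix, not of $\bW_1$ alone.
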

\begin{proof}
\eqref{eq.Cs.Cw.DC} and \eqref{eq.DC.bounds} follow from the bounds in \eqref{eq.T1.1}, which also implies $\Delta C \rightarrow 0$ as $P_T\lam_1(\bW_2) \rightarrow 0$. To show that $C_s/C_w \rightarrow 1$ as $P_T \rightarrow 0$ observe that
\bal\notag
C_s = P_T \lam_1(\bW_1-\bW_2) + o(P_T) = C_w + o(P_T)
\eal
from which the desired result follows (here, we implicitly assume that $\lam_1(\bW_1-\bW_2)>0$; otherwise, $C_s=0$ and there is nothing to prove). When $\lam_1(\bW_2) \rightarrow 0$, note that both $C(\bR)$ and $C_w(\bR)$ converge to $\ln|\bI+\bW_1\bR|$ so that taking $\max_{\bR}$ results in $C_s/C_w \rightarrow 1$ (since the objectives are continuous and the feasible set is compact).
\end{proof}

Using this Corollary, the secrecy capacity can be approximated as
\bal
\label{eq.Cs.approx.Cw}
C_s \approx C_w
\eal
and the approximation is accurate for a weak eavesdropper or/and low SNR: $\lam_1(\bW_2)P_T \ll 1$, when the bounds in \eqref{eq.T1.1} are also tight, see Fig. 1.

\begin{remark}
Since $\lambda_1(\bW_2\bR) \le \lambda_1(\bW_2)\lambda_1(\bR) \le P_T\lambda_1(\bW_2)$, one way to ensure that the eavesdropper is weak, i.e. $\lambda_1(\bW_2 \bR) \ll 1$ so that $\ln|\bI+\bW_2\bR| \approx \tr \bW_2\bR$, is to require $\lambda_1(\bW_2) \ll 1/P_T$
from which it follows that this holds as long as the power (or SNR) is not too large, i.e. $P_T \ll 1/\lambda_1(\bW_2)$; see also Fig. 1. It should be noted, however, that the approximation in \eqref{eq.Cs.approx.Cw} extends well beyond the low-SNR regime provided that the eavesdropper propagation path loss is sufficiently large (i.e. $\lambda_1(\bW_2)$ is small). For the scenario in Fig. 1, it works well up to about $10$ dB and this can extend to larger SNR for smaller path loss factor $\alpha$.
\end{remark}

\begin{figure}[t]%[htbp]
\label{fig_1}
\centerline{\includegraphics[width=3in]{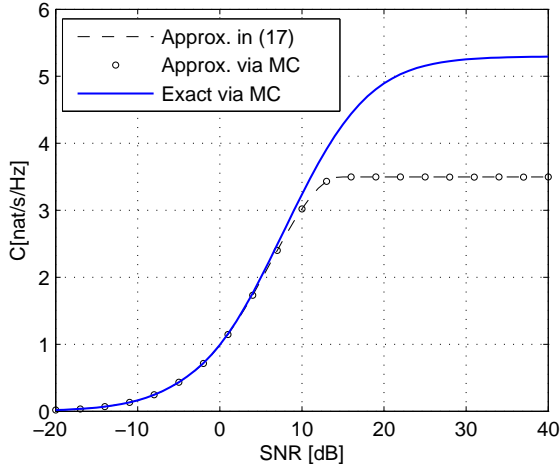}}
\caption{Weak eavesdropper approximation in \eqref{eq.Cs.approx.Cw} and exact secrecy capacity (via MC) versus SNR. $\bW_1$ and $\bW_2$ are as in \eqref{eq.Fig.1}, $\alpha=0.1$, $\lam_1(\bW_2) \approx 0.25$. The approximation is accurate if $\text{SNR} < 10$ dB. Note the capacity saturation effect at high SNR in both cases.}
\end{figure}

To illustrate Theorem 1 and Corrolary 1 and also to see how accurate the approximation is, Fig. 1 shows the secrecy capacity obtained from the approximation in \eqref{eq.Cs.approx.Cw} for
\bal
\label{eq.Fig.1}
\bW_1 = \left(
          \begin{array}{cc}
            2 & 0 \\
            0 & 1 \\
          \end{array}
        \right), \
\bW_2 = \alpha \left(
          \begin{array}{cc}
            2 & 1 \\
            1 & 1 \\
          \end{array}
        \right), \
\eal
also, its exact values (without the weak eavesdropper approximation) obtained by brute force Monte-Carlo (MC) based approach (where a large number of covariance matrices are randomly generated, subject to the total power constraint, and the best one is selected) are shown for comparison. To validate the analytical solution for $C_w$ in Theorem 1, the weak eavesdropper case has also been solved by the MC-based approach. It is clear that the approximation $C_s\approx C_w$ is accurate for the channel in \eqref{eq.Fig.1} provided that $\text{SNR} < 10$ dB. Also note the capacity saturation effect, for both the approximate and exact values. This saturation effect has been already observed in \cite{Khisti-2}\cite{Loyka} and, in the case of $\bW_1>\bW_2>\bf{0}$, the saturation capacity is
\bal
\label{Cs*}
C_s^* = \ln|\bW_1| - \ln|\bW_2|
\eal
which follows directly from \eqref{eq2} by neglecting $\bI$. In the weak eavesdropper approximation, the saturation effect is due to the fact that the 2nd term in \eqref{eq.Cw} is linear in $P_T$ while the 1st one is only logarithmic, so that using the full available power is not optimal when it is sufficiently high. Roughly, the approximation is accurate before it reaches the saturation point, i.e. for $P_T < P_T^*$. The respective saturation capacity is obtained from \eqref{eq.T1.Cw} by setting $\lambda=0$. In the case of $\bW_1>\bW_2>\bf{0}$, it is given by
\bal
\label{C*}
C_w = \ln|\bW_1| - \ln|\bW_2| - \tr (\bI -\bW_2\bW_1^{-1})
\eal
By comparing \eqref{Cs*} and \eqref{C*}, one concludes that the thresholds are close to each other when $\tr\bW_2\bW_1^{-1} \approx m$.

To obtain further insight in the weak eavesdropper regime, let us consider the case when $\bW_1$ and $\bW_2$ have the same eigenvectors. This is a broader case than it may first appear as it requires $\bH_1$ and $\bH_2$ to have the same right singular vectors while leaving left ones unconstrained (see Section \ref{sec:Identical Right Singular Vectors} for more details on this scenario). In this case, the results in Theorem 1 and Corollary 1 simplify as follows.

\begin{cor}
Under the weak eavesdropper condition $\lam_1(\bW_2)\ll 1/P_T$ and when $\bW_1$ and $\bW_2$ have the same eigenvectors, the optimal covariance is
\begin{align}
\label{eq.P1.1}
\bR^* \approx \bR^*_w = \bU\bLam^*\bU^{\dag}
\end{align}
where $\bU$ is found from the eigenvalue decompositions $\bW_i = \bU \bLam_i\bU^{\dag}$ so that the eigenvectors of $\bR^*_w$ are the same as those of $\bW_1$ and $\bW_2$. The diagonal matrix $\bLam^*$ collects the eigenvalues of $\bR^*_w$:
\bal
\label{eq.P1.2}
\lambda_i(\bR^*_w) = \Lb \frac{1}{\lambda+\lambda_{2i}} -\frac{1}{\lambda_{1i}} \Rb_+
\eal
where $\lambda_{ki}$ is $i$-th eigenvalue of $\bW_k$.
\end{cor}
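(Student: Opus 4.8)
The plan is to obtain the claim directly from Theorem 1 by noting that, when $\bW_1$ and $\bW_2$ commute, a single unitary $\bU$ simultaneously diagonalizes every matrix appearing in \eqref{eq.T1.2}--\eqref{eq.T1.3}, and to get the approximation $\bR^*\approx\bR^*_w$ from Corollary 1. First I would write the common eigendecomposition $\bW_k=\bU\bLam_k\bU^{\dag}$, $\bLam_k=\mathrm{diag}(\lam_{k1},\dots,\lam_{km})$, so that $\bW_\lam=\lam\bI+\bW_2=\bU(\lam\bI+\bLam_2)\bU^{\dag}$ and its Moore--Penrose pseudo-inverse is $\bQ=\bU\,\mathrm{diag}(g_i)\,\bU^{\dag}$ with $g_i=(\lam+\lam_{2i})^{-1}$ when $\lam+\lam_{2i}>0$ and $g_i=0$ otherwise; hence $\bQ^{1/2}=\bU\,\mathrm{diag}(g_i^{1/2})\,\bU^{\dag}$. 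Plugging this into \eqref{eq.T1.3} gives $\widehat{\bW}_1=\bU\,\mathrm{diag}(\widehat{\lam}_{1i})\,\bU^{\dag}$ with $\widehat{\lam}_{1i}=g_i\lam_{1i}=\lam_{1i}/(\lam+\lam_{2i})$, i.e.\ $\widehat{\bW}_1$ is already diagonal in the $\bU$-basis.

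Next I would apply the $(\cdot)_+$ operator in the limiting sense of Remark 1, obtaining $(\bI-\widehat{\bW}_1^{-1})_+=\bU\,\mathrm{diag}\big((1-\widehat{\lam}_{1i}^{-1})_+\big)\,\bU^{\dag}$ (a zero eigenmode of $\widehat{\bW}_1$ contributing $0$), and substitute it together with $\bQ^{1/2}$ into \eqref{eq.T1.2}. Since all three factors are simultaneously diagonalized by $\bU$, so is $\bR^*_w$, which yields \eqref{eq.P1.1} with $\lam_i(\bR^*_w)=g_i(1-\widehat{\lam}_{1i}^{-1})_+$. A one-line algebraic identity, $c^{-1}(1-c/a)_+=(c^{-1}-a^{-1})_+$ for $c>0$, then rewrites this as $\lam_i(\bR^*_w)=\big(\tfrac{1}{\lam+\lam_{2i}}-\tfrac{1}{\lam_{1i}}\big)_+$, which is \eqref{eq.P1.2}. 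The multiplier $\lam\ge0$ is inherited verbatim from Theorem 1; here it is fixed by the scalar water-filling-type equation $\sum_i\big(\tfrac{1}{\lam+\lam_{2i}}-\tfrac{1}{\lam_{1i}}\big)_+=P_T$ when $P_T<P_T^*$, and $\lam=0$ otherwise.

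For the statement $\bR^*\approx\bR^*_w$, I would simply appeal to Corollary 1: under the stated condition $\lam_1(\bW_2)P_T\ll1$ the functionals $C_w(\bR)$ and $C(\bR)$ differ by at most $\tfrac{1}{2}\lam_1^2(\bW_2)P_T^2$ uniformly on the compact feasible set, so a maximizer of $C_w$ is a near-maximizer of $C$; quantitatively \eqref{eq.T1.1} gives $0\le C_s-C(\bR^*_w)\le\tfrac{1}{2}\lam_1^2(\bW_2)P_T^2$.

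The step I expect to be the main (though routine) obstacle is the bookkeeping of degenerate eigenmodes when $\bW_1$ or $\bW_2$ is singular: a mode with $\lam_{1i}=0$ has $\widehat{\lam}_{1i}=0$ and is discarded by $(\cdot)_+$; and a mode with $\lam+\lam_{2i}=0$ can occur only when $\lam=0$ and $\lam_{2i}=0$, in which case the regularity condition $\sN(\bW_2)\subseteq\sN(\bW_1)$ carried over from Theorem 1 forces $\lam_{1i}=0$ too, and then $g_i=0$ makes $\lam_i(\bR^*_w)=0$ automatically. Verifying that these conventions make every line of the specialization well-defined --- so that \eqref{eq.P1.2} is exactly the entrywise image of the matrix identity \eqref{eq.T1.2} --- is all that the proof requires; no KKT analysis needs to be redone, since the heavy lifting is already in Theorem 1.
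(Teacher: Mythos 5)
Your proposal is correct and follows essentially the same route as the paper, whose proof of this corollary is just the one-line instruction to substitute $\bW_i=\bU\bLam_i\bU^{\dag}$ into \eqref{eq.T1.2}; you simply spell out the simultaneous diagonalization, the scalar identity $c^{-1}(1-c/a)_+=(c^{-1}-a^{-1})_+$, and the degenerate-mode bookkeeping that the paper leaves implicit. The appeal to Corollary 1 for $\bR^*\approx\bR^*_w$ matches the paper's intent as well.
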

\begin{proof}
Using $\bW_i=\bU\bLam_i\bU^{\dag}$ in \eqref{eq.T1.2} results in \eqref{eq.P1.1} and \eqref{eq.P1.2}.
\end{proof}

Note that the power allocation in \eqref{eq.P1.2} resembles that of the standard water filling, except for the $\lambda_{2i}$ term. In particular, only sufficiently strong eigenmodes are active:
\bal
\label{eq.P1.3}
\lambda_i(\bR^*_w) > 0 \ \mbox{iff} \ \lambda_{1i} > \lambda+\lambda_{2i}
\eal
As $P_T$ increases, $\lambda$ decreases so that more eigenmodes become active; the legitimate channel eigenmodes are active provided that they are stronger that those of the eavesdropper: $\lambda_{1i} > \lambda_{2i}$. Only the strongest eigenmode (for which the difference $\lambda_{1i}-\lambda_{2i}$ is largest) is active at low SNR.

\vspace*{0.5\baselineskip}
\section{Isotropic Eavesdropper and Capacity Bounds}
\label{sec:isotropic}

The model in Section \ref{sec:Weak Eavesdropper} requires the full eavesdropper CSI at the transmitter. This becomes questionable if the eavesdropper does not cooperate (e.g. when it is hidden in order not to compromise its eavesdropping ability). One approach to address this issue is via a compound channel model \cite{Khisti'11}-\cite{Schaefer'13}. An alternative approach is considered here, where the eavesdropper is characterized by its channel gain identical in all directions, which we term "isotropic eavesdropper". This minimizes the amount of CSI available at the transmitter (one scalar parameter and no directional properties).

A further physical justification for this model comes from an assumption that the eavesdropper cannot approach the transmitter too closely due to e.g. some minimum protection distance, see Fig. 2. This ensures that the gain of the eavesdropper channel does not exceed a certain threshold in any transmit direction due to the minimum propagation path loss (induced by the minimum distance constraint). Since the channel power gain in transmit direction $\bf u$ is ${\bf u}^{\dag} {\bf W}_2 {\bf u} = |{\bf H}_2 {\bf u}|^2$ (assuming $|{\bf u}|=1$) and since $\max_{|{\bf u}|=1} {\bf u}^{\dag} {\bf W}_2 {\bf u} = \epsilon_1$ (from the variational characterization of eigenvalues \cite{Horn-1}), where $\epsilon_1$ is the largest eigenvalue of $\bW_2$, ${\bf W}_2 \leq \epsilon_1 {\bf I}$ ensures that the eavesdropper channel power gain does not exceed $\epsilon_1$ in any direction.

In combination with matrix monotonicity of the log-det function,
the latter inequality ensures that $\epsilon_1 {\bf I}$ is the worst possible ${\bf W}_2$ that results in the smallest capacity (the lower bound in \eqref{eq4-2}), i.e. the isotropic eavesdropper with the maximum channel gain is the worst possible one among all eavesdroppers with a bounded spectral norm. Referring to Fig. 2, the eavesdropper channel matrix $\bH_2$ can be presented in the following form:
\bal
\bH_2 = \sqrt{\alpha R_2^{-\nu}}\widetilde{\bH}_2
\eal
where $\alpha R_2^{-\nu}$ represents the average propagation path loss, $R_2$ is the eavesdropper-transmitter distance, $\nu$ is the path loss exponent (which depends on the propagation environment), $\alpha$ is a constant independent of distance (but dependent on frequency, antenna height, etc.) \cite{Rappaport} , and $\widetilde{\bH}_2$ is a properly normalized channel matrix (includes local scattering/multipath effects but excludes the average path loss) so that $\tr\widetilde{\bH}_2^{\dag}\widetilde{\bH}_2 \le n_2 m$ \cite{Loyka-09}. With this in mind, one obtains:
\bal
\label{eq.iso.2}
\bW_2 = \bH_2^{\dag}\bH_2 = \frac{\alpha}{R_2^{\nu}}\widetilde{\bH}_2^{\dag} \widetilde{\bH}_2 \le \frac{\alpha}{R_{2\min}^{\nu}}\widetilde{\bH}_2^{\dag} \widetilde{\bH}_2 \le \frac{\alpha n_2 m}{R_{2\min}^{\nu}}\bI
\eal
so that one can take $\epsilon_1 = \alpha n_2 m R_{2\min}^{-\nu}$ in this scenario, where $R_{2\min}$ is the minimum transmitter-eavesdropper distance. Note that the model captures the impact of the number of transmit and eavesdropper antennas, in addition to the minimum distance and propagation environment. In our view, the isotropic eavesdropper model is more practical than the full Tx CSI model.

\begin{figure}[htbp]
\label{fig.iso}
\centerline{\includegraphics[width=2.5in]{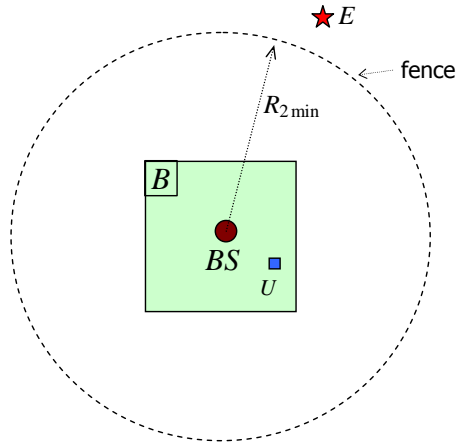}} \caption{Physical scenario for a secret communication system: base station BS (the transmitter) is located on the rooftop of a secure building $B$, legitimate user $U$ (the receiver) is inside the building $B$, and eavesdropper $E$ is beyond the fence so that $R_2 \ge R_{2\min}$.}
\label{fig:Fig_2}
\end{figure}

The isotropic eavesdropper model is closely related to the parallel channel setting in \cite{Khisti-08}\cite{Li-10}: even though the original channel is not parallel, it can be transformed into a parallel channel\footnote{via an information-preserving transformation: using a unitary transmit pre-coding with the unitary matrix whose columns are the eigenvectors of $\bW_1$ and unitary post-codings at the receiver and eavesdropper with unitary matrices whose columns are the left singular vectors of $\bH_1$ and $\bH_2$ respectively.}, for which independent signaling is known to be optimal \cite{Khisti-08}\cite{Li-10}. This shows that signaling on the eigenvectors of $\bW_1$ is optimal in this case while an optimal power allocation is different from the standard water filling  \cite{Li-10}. These properties in combination with the bounds in \eqref{eq4-1} are exploited below.

While it is a challenging analytical task to evaluate the secrecy capacity in the general case, one can use the isotropic eavesdropper model above to construct lower and upper capacity bounds for the general case using the standard matrix inequalities,
\begin{equation}
\label{eq4-1}
 \epsilon_m {\bf I} \leq {\bf W}_2 \leq \epsilon_1 {\bf I}
\end{equation}
where $\epsilon_i = \lambda_i({\bf W}_2)$ denotes $i$-th largest eigenvalue of ${\bf W}_2$, and the equalities are achieved when $\epsilon_1 = \epsilon_m$, i.e. by the isotropic eavesdropper. This is formalized below.

\begin{prop}
\label{prop.isotropic}
The secrecy capacity of the general MIMO-WTC  in \eqref{eq3} is bounded as follows:
\begin{equation}
\label{eq4-2}
 C^*(\epsilon_1)  \leq C_s \leq C^*(\epsilon_m)
\end{equation}
where $C^*(\epsilon)$ is the secrecy capacity if the eavesdropper were isotropic, i.e. under ${\bf W}_2 = \epsilon {\bf I}$,
\begin{eqnarray}
\label{eq4-3} %\notag
C^*(\epsilon) = \mathop {\max }\limits_{{\bf R} \ge 0, \ tr{\bf R} \le P_T} \ln \frac{\left| {{\bf I}+{\bf W}_1 {\bf R}} \right|}{\left| {\bf I}+ \epsilon {\bf R} \right|}
= \sum_i \ln \frac{1+g_i \lambda_i^*}{1+ \epsilon \lambda_i^*}
\end{eqnarray}
$g_i=\lambda_i({\bf W}_1)$, and $\lambda_i^* = \lambda_i({\bf R}^*)$ are the eigenvalues of the optimal transmit covariance ${\bf R}^* = {\bf U}_1 {\bf \Lambda}^*{\bf U}_1^{\dag}$ under the isotropic eavesdropper,
\begin{eqnarray}
\label{eq4-4}
\lambda_i^* = \frac{\epsilon + g_i}{2\epsilon g_i} \left(\sqrt{1+ \frac{4\epsilon g_i}{(\epsilon + g_i)^2} \left( \frac{g_i - \epsilon}{\lambda} - 1 \right)_+} -1 \right)
\end{eqnarray}
and $\lambda > 0$ is found from the total power constraint $\sum_i \lambda_i^* = P_T$.

The gap in the  bounds of \eqref{eq4-2} is upper bounded as follows:
\bal
\label{eq4-4a}
\Delta C = C^*(\epsilon_m) - C^*(\epsilon_1) \le m_+ \ln \frac{1+\epsilon_1 P_T/m_+}{1+\epsilon_m P_T/m_+} \le m_+\ln \frac{\epsilon_1}{\epsilon_m}
\eal
where $m_+$ is the number of eigenmodes such that $g_i > \epsilon_m$. Both bounds are tight at high SNR if $g_{m+} > \epsilon_1$.
\end{prop}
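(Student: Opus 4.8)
\emph{Strategy and the bounds \eqref{eq4-2}.} I would prove the three assertions of the proposition in turn. For the sandwich \eqref{eq4-2}, fix $\bR\ge 0$ and write $|\bI+\bW_2\bR|=|\bI+\bR^{1/2}\bW_2\bR^{1/2}|$. The congruence $\bX\mapsto\bR^{1/2}\bX\bR^{1/2}$ preserves the positive‑semidefinite order and $\bM\mapsto|\bI+\bM|$ is order‑monotone on that cone, so the matrix inequalities \eqref{eq4-1} give $\ln|\bI+\epsilon_m\bR|\le\ln|\bI+\bW_2\bR|\le\ln|\bI+\epsilon_1\bR|$. Subtracting these from $C_1(\bR)=\ln|\bI+\bW_1\bR|$ and maximising over the feasible set $\{\bR\ge 0:\tr\bR\le P_T\}$, which does not involve $\bW_2$, yields $C^*(\epsilon_1)\le C_s\le C^*(\epsilon_m)$, with all equalities when $\epsilon_1=\epsilon_m$.

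\emph{The isotropic solution.} When $\bW_2=\epsilon\bI$ the channel is information‑losslessly equivalent to a bank of parallel scalar Gaussian wiretap channels: pre‑code the input by the eigenvector matrix $\bU_1$ of $\bW_1$ and post‑code the two outputs by the left singular vectors of $\bH_1$ and $\bH_2$, which diagonalises $\bW_1$, leaves $\bW_2=\epsilon\bI$ unchanged, and preserves $\tr\bR$. Since independent signalling is optimal on parallel Gaussian wiretap channels \cite{Khisti-08}\cite{Li-10}, an optimal covariance is $\bR^*=\bU_1\bLam^*\bU_1^\dag$, and \eqref{eq4-3} reduces to $\max\sum_i f_i(\lambda_i)$ over $\lambda_i\ge 0$, $\sum_i\lambda_i\le P_T$, with $f_i(\lambda)=\ln(1+g_i\lambda)-\ln(1+\epsilon\lambda)$. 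A short computation gives $f_i'(\lambda)=(g_i-\epsilon)/[(1+g_i\lambda)(1+\epsilon\lambda)]$, so modes with $g_i\le\epsilon$ have $f_i'<0$ and are set to $\lambda_i^*=0$, while for $g_i>\epsilon$ the function $f_i$ is strictly increasing and strictly concave; on the latter modes the objective is concave, so the KKT stationarity $f_i'(\lambda_i^*)=\lambda$ is necessary and sufficient for active modes. This is the quadratic $\epsilon g_i(\lambda_i^*)^2+(g_i+\epsilon)\lambda_i^*+1-(g_i-\epsilon)/\lambda=0$ whose non‑negative root is precisely \eqref{eq4-4}, the $(\cdot)_+$ inside the radical encoding the activation rule $\lambda<g_i-\epsilon$ and returning $\lambda_i^*=0$ otherwise. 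Monotonicity of $\sum_i\lambda_i^*$ in $\lambda$ determines $\lambda$ from $\sum_i\lambda_i^*=P_T$, and substituting \eqref{eq4-4} back into $\sum_i\ln\frac{1+g_i\lambda_i^*}{1+\epsilon\lambda_i^*}$ gives the stated value of $C^*(\epsilon)$.

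\emph{The gap \eqref{eq4-4a}.} Let $\bR_m^*$ be the optimal covariance of the $\epsilon_m$‑problem; by the previous step it is diagonal in the eigenbasis of $\bW_1$ with positive eigenvalues $\lambda_i^*$ only on modes with $g_i>\epsilon_m$ (at most $m_+$ of them), summing to $P_T$. Feasibility of $\bR_m^*$ in the $\epsilon_1$‑problem gives $C^*(\epsilon_1)\ge C_1(\bR_m^*)-\ln|\bI+\epsilon_1\bR_m^*|$, hence $\Delta C\le\ln|\bI+\epsilon_1\bR_m^*|-\ln|\bI+\epsilon_m\bR_m^*|=\sum_{i:\lambda_i^*>0}\phi(\lambda_i^*)$ with $\phi(x)=\ln\frac{1+\epsilon_1x}{1+\epsilon_mx}$. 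Since $\epsilon_1>\epsilon_m$, $\phi'(x)=(\epsilon_1-\epsilon_m)/[(1+\epsilon_1x)(1+\epsilon_mx)]>0$ and $\phi''(x)<0$, so $\phi$ is concave with $\phi(0)=0$; Jensen over the $k\le m_+$ active modes gives $\sum\phi(\lambda_i^*)\le k\,\phi(P_T/k)$, and $k\mapsto k\,\phi(P_T/k)$ is nondecreasing (from concavity of $\phi$ and $\phi(0)=0$), so $\Delta C\le m_+\phi(P_T/m_+)$, the first inequality of \eqref{eq4-4a}; the second follows from $\frac{1+a}{1+b}\le\frac ab$ for $a\ge b>0$ with $a=\epsilon_1P_T/m_+$, $b=\epsilon_mP_T/m_+$, and $\Delta C\ge 0$ from monotonicity of $C^*(\cdot)$ in $\epsilon$. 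For the high‑SNR claim, as $P_T\to\infty$ the multiplier $\lambda\to 0$, so $\lambda_i^*\to\infty$ on every active mode and each summand of \eqref{eq4-3} tends to $\ln(g_i/\epsilon)$; if $g_{m+}>\epsilon_1$ the active set is $\{1,\dots,m_+\}$ for both $\epsilon_1$ and $\epsilon_m$, whence $\Delta C\to m_+\ln(\epsilon_1/\epsilon_m)$, which is also the $P_T\to\infty$ limit of the middle term of \eqref{eq4-4a}, so both inequalities there become equalities.

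\emph{Main obstacle.} The delicate step is the reduction to parallel channels and the attendant optimality of a $\bW_1$‑diagonal covariance: one must verify that the unitary pre/post‑coding is genuinely lossless and power‑preserving, and invoke \cite{Khisti-08}\cite{Li-10} correctly. A naive ``replace $\bR$ by its diagonal in the $\bW_1$‑basis'' does not work, since Hadamard's inequality increases $C_1(\bR)$ but Schur majorisation together with concavity of $x\mapsto\ln(1+\epsilon x)$ increases $\ln|\bI+\epsilon\bR|$ as well, so the secrecy rate need not improve; the parallel‑channel route is what circumvents this. Everything downstream is one‑variable calculus and convexity.
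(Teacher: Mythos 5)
Your proposal is correct, and its overall skeleton matches the paper's appendix proof: the sandwich \eqref{eq4-2} via matrix monotonicity of $\ln|\bI+\bW\bR|$ in $\bW$ (your congruence argument $\bW_2\le\epsilon_1\bI\Rightarrow\bR^{1/2}\bW_2\bR^{1/2}\le\epsilon_1\bR$ is just an explicit rendering of that), and the reduction of \eqref{eq4-3} to a scalar power-allocation problem. Two sub-steps differ genuinely. First, for the optimality of $\bR^*=\bU_1\bLam^*\bU_1^\dag$ the paper's appendix invokes a Hadamard-type eigenvalue argument (rotating the eigenvectors of $\bR$ onto those of $\bW_1$ while keeping its eigenvalues fixed, which increases the numerator and leaves $|\bI+\epsilon\bR|$ literally unchanged) and then cites \cite{Li-10} for the water-filling formula; you instead use the unitary pre/post-coding reduction to parallel wiretap channels (which the paper itself sketches only in the main text footnote) and then derive \eqref{eq4-4} from scratch via the scalar KKT quadratic. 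Your derivation is self-contained where the paper outsources it, and your cautionary remark is apt: the ``take the diagonal part of $\bR$ in the $\bU_1$ basis'' version of Hadamard would fail here because it perturbs the eigenvalues of $\bR$ and, by majorization, can only increase the denominator; the paper's wording must be read as the eigenvalue-preserving rotation. Second, for the gap \eqref{eq4-4a} the paper bounds $\max f-\max g\le\max(f-g)$ restricted to modes with $g_i>\epsilon_m$ and then uses permutation symmetry plus concavity to show equal power $P_T/m_+$ maximizes $\sum_i\ln\frac{1+\epsilon_1\lambda_i}{1+\epsilon_m\lambda_i}$; you instead evaluate the $\epsilon_m$-optimal covariance as a feasible point of the $\epsilon_1$-problem and apply Jensen together with the monotonicity of $k\mapsto k\,\phi(P_T/k)$. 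Both routes are valid and yield the identical bound; yours is the more standard ``plug in the suboptimal covariance'' device, the paper's avoids any reference to which covariance attains the $\epsilon_m$-optimum. Your high-SNR tightness argument coincides with the paper's \eqref{eq4-14}--\eqref{eq4-15}.
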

\textbf{Proof:} See the Appendix.

Thus, the optimal signaling for the isotropic eavesdropper case is on the eigenvectors of ${\bf W}_1$ (or right singular vectors of ${\bf H}_1$), identically to the regular MIMO channel, with  the optimal power allocation somewhat similar (but not identical) to the conventional water filling. The latter is further elaborated below for the high and low SNR regimes. Unlike the general case (of non-isotropic eavesdropper), the secrecy capacity of the isotropic eavesdropper case does not depend on the eigenvectors of ${\bf W}_1$ (but the optimal signaling does), only on its eigenvalues, so that the optimal signaling problem here separates into 2 independent parts: (i) optimal signaling directions are selected as the eigenvectors of ${\bf W}_1$, and (ii) optimal power allocation is done based on the eigenvalues of ${\bf W}_1$ and the eavesdropper channel gain $\epsilon$. It is the lack of this separation that makes the optimal signaling problem so difficult in the general case.

The bounds in \eqref{eq4-2} coincide when $\epsilon_1 = \epsilon_m$ thus giving the secrecy capacity of the isotropic eavesdropper. Furthermore, as follows from \eqref{eq4-4a}, they are close to each other when the condition number $\epsilon_1/\epsilon_m$ of ${\bf W}_2$ is not too large, thus providing a reasonable estimate of the capacity, see Fig. 3. Referring to Fig. 2, one can also set $\epsilon_1 = \alpha n_2 m R_{2\min}^{-\nu}$ and proceed with a conservative system design to achieve the secrecy rate $C^*(\epsilon_1)$. Note that this design requires only the knowledge of $n_2$ and $R_{2\min}$ at the transmitter, not full CSI ($\bW_2$) and hence is more realistic. This signaling strategy does not incur significant penalty (compared to the full CSI case) provided that the condition number  $\epsilon_1/\epsilon_m$ is not large, as follows from \eqref{eq4-4a}. It can be further shown that $C^*(\epsilon_1)$ is the compound channel capacity for the class of eavesdroppers with bounded spectral norm (maximum channel gain), $\bW_2 \le \epsilon_1\bI$, and that signaling on the worst-case channel ($\bW_2 = \epsilon_1\bI$) achieves the capacity for the whole class of channels with $\bW_2 \le \varepsilon\bI$ \cite{Schaefer'13}.

We note that the power allocation in \eqref{eq4-4} has properties similar to those of the conventional water-filling, which follow from Proposition \ref{prop.isotropic}.
\begin{prop}
\label{Proposition4-2}
Properties of the optimum power allocation in \eqref{eq4-4} for the isotropic eavesdropper:

1. $\lambda_i^*$ is an increasing function of $g_i$ (strictly increasing unless $\lambda_i^*=0$ or $P_T$) , i.e. stronger eigenmodes get more power (as in the standard WF).

2. $\lambda_i^*$ is an increasing function of $P_T$ (strictly increasing unless $\lambda_i^*=0$). $\lambda_i^* =0$ for $i > 1$ and $\lambda_1^* = P_T$ as $P_T \rightarrow 0$ if $g_1 > g_2$, i.e. only the strongest eigenmode is active at low SNR, and $\lambda_i^* >0$ if $g_i > \epsilon$ as $P_T \rightarrow \infty$, i.e. all sufficiently strong eigenmodes are active at high SNR.

3. $\lambda_i^* > 0$ only if $g_i > \epsilon$, i.e. only the eigenmodes stronger than the eavesdropper ones can be active.

4. $\lambda$ is a strictly decreasing function of $P_T$ and  $0<\lambda < g_1 - \epsilon$; $\lambda \rightarrow 0$ as $P_T \rightarrow \infty$ and $\lambda \rightarrow g_1 - \epsilon$ as $P_T \rightarrow 0$.

5. There are $m_+$ active eigenmodes if the following inequalities hold:
\begin{eqnarray}
\label{eq4-5}
P_{m_+} < P_T \le P_{m_+ +1}
\end{eqnarray}
where $P_{m_+}$ is a threshold power (to have at least $m_+$ active eigenmodes):
\begin{align}
\label{eq4-6}
P_{m_+} = \sum_{i=1}^{m_+ -1} \frac{\epsilon + g_i}{2\epsilon g_i} \left(\sqrt{1+ \frac{4\epsilon g_i}{(\epsilon + g_i)^2} \frac{g_i - g_{m_+}}{(g_{m_+} - \epsilon)_+}} -1 \right),\ m_+ = 2...m,
\end{align}
and $P_1 = 0$, so that $m_+$ increases with $P_T$.
\end{prop}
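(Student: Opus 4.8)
The plan is to establish Proposition \ref{Proposition4-2} by direct analysis of the closed-form power allocation \eqref{eq4-4} obtained in Proposition \ref{prop.isotropic}, treating $\lambda$ as the single Lagrange-type parameter that ties all $\lambda_i^*$ together through $\sum_i\lambda_i^* = P_T$. It is convenient to rewrite \eqref{eq4-4} in the form $\lambda_i^* = f(g_i,\lambda)$ where, for $g_i > \epsilon$ and $\lambda < g_i - \epsilon$, one has $\lambda_i^* = \frac{\epsilon+g_i}{2\epsilon g_i}\Lb\sqrt{1 + \frac{4\epsilon g_i}{(\epsilon+g_i)^2}\Lb\frac{g_i-\epsilon}{\lambda}-1\Rb} - 1\Rb$, and $\lambda_i^*=0$ otherwise. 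I would first verify the elementary boundary facts used repeatedly later: the quantity under the square root equals $1$ exactly when $\lambda = g_i-\epsilon$ (giving $\lambda_i^*=0$ continuously), and $\lambda_i^*>0$ requires the $(\cdot)_+$ argument to be positive, i.e. $\lambda < g_i - \epsilon$, which in particular forces $g_i > \epsilon$; this is precisely item 3, and it also pins down the range $0 < \lambda < g_1-\epsilon$ claimed in item 4 (the upper end because otherwise no mode is active and $P_T=0$).

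For item 1, I would differentiate $f(g,\lambda)$ with respect to $g$ at fixed $\lambda$, or — to avoid a messy derivative — argue monotonicity structurally: the map $g \mapsto \frac{g-\epsilon}{\lambda}-1$ is increasing, the outer square-root-minus-one composition is increasing in that argument, and although the prefactor $\frac{\epsilon+g}{2\epsilon g}$ is decreasing in $g$, one can rescale and show $\lambda_i^*$ is an increasing function by writing $1 + 2\epsilon g\lambda_i^*/(\epsilon+g) = \sqrt{\cdots}$ and squaring, yielding an implicit relation $\epsilon g (\lambda_i^*)^2 + (\epsilon+g)\lambda_i^* = \frac{1}{\lambda}(g-\epsilon) - 1$ (for active modes); implicit differentiation of this polynomial relation in $g$ with $\lambda$ fixed is clean and gives $\frac{\partial \lambda_i^*}{\partial g} > 0$, with equality only at the inactive boundary. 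This squared/implicit form $\epsilon g (\lambda_i^*)^2 + (\epsilon+g)\lambda_i^* + 1 = (g-\epsilon)/\lambda$ is the workhorse for the whole proposition and I would derive it once up front. For items 2 and 4, I would note that $\lambda_i^*$ depends on $P_T$ only through $\lambda$, so I must first show $\lambda$ is strictly decreasing in $P_T$: since each active $\lambda_i^*$ is strictly decreasing in $\lambda$ (immediate from the implicit relation, as the right side $(g_i-\epsilon)/\lambda$ decreases in $\lambda$), the sum $\sum_i \lambda_i^*$ is strictly decreasing in $\lambda$ on its relevant range, hence invertible, so larger $P_T$ forces smaller $\lambda$; composing, $\lambda_i^*$ is increasing in $P_T$. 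The limits $\lambda\to 0$ as $P_T\to\infty$ and $\lambda \to g_1-\epsilon$ as $P_T\to 0$ follow from $\sum_i\lambda_i^* \to \infty$ and $\to 0$ respectively at those endpoints. For the low-SNR claim in item 2, as $\lambda \uparrow g_1-\epsilon$ only mode $1$ stays active (assuming $g_1>g_2$, so $g_2-\epsilon < g_1-\epsilon$), and $\lambda_1^* \to 0$ with $\lambda_1^* = P_T$ forced by the constraint; the high-SNR claim is that $\lambda\to 0$ makes $(g_i-\epsilon)/\lambda - 1 > 0$ for every $g_i>\epsilon$, activating all such modes.

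Item 5 is the bookkeeping consequence: mode $i$ is active iff $\lambda < g_i-\epsilon$, so with eigenvalues ordered $g_1\ge g_2\ge\cdots$ the number of active modes is the count of $i$ with $g_i - \epsilon > \lambda$; as $P_T$ increases $\lambda$ decreases so this count is nondecreasing. The threshold $P_{m_+}$ is the power at which the $m_+$-th mode is on the verge of activation, i.e. $\lambda = g_{m_+}-\epsilon$; substituting this value of $\lambda$ into the implicit relation for modes $i=1,\dots,m_+-1$ and summing gives exactly \eqref{eq4-6} (mode $m_+$ itself contributes $0$, and modes beyond contribute $0$), while $P_1=0$ since at $\lambda = g_1-\epsilon$ no mode is active. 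The main obstacle I anticipate is purely algebraic rather than conceptual: wrangling the square-root expression \eqref{eq4-4} into the clean quadratic/implicit form and carefully tracking the $(\cdot)_+$ truncation across the boundary $\lambda = g_i - \epsilon$ so that all monotonicity and continuity statements (including the "strictly increasing unless $\lambda_i^*=0$ or $P_T$" caveats) hold with the correct strictness; once the implicit relation $\epsilon g_i(\lambda_i^*)^2 + (\epsilon+g_i)\lambda_i^* + 1 = (g_i-\epsilon)/\lambda$ is in hand, every item reduces to sign-of-derivative arguments on a single-variable monotone map.
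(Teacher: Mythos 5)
Your proposal is correct and follows exactly the route the paper intends: the paper gives no separate proof of this proposition, stating only that the properties "follow from Proposition 1," i.e., from a direct analysis of the closed form \eqref{eq4-4}, which is precisely what you carry out. Your workhorse identity $\epsilon g_i(\lambda_i^*)^2+(\epsilon+g_i)\lambda_i^*+1=(g_i-\epsilon)/\lambda$ is just the factored KKT stationarity condition $(1+g_i\lambda_i^*)(1+\epsilon\lambda_i^*)=(g_i-\epsilon)/\lambda$, which confirms it is the right object; the only point to watch is that the "strictly increasing unless $\lambda_i^*=P_T$" caveat in item 1 is not captured by the fixed-$\lambda$ differentiation (when a single mode carries all the power, $\lambda$ must re-adjust with $g_1$), but you flag this explicitly and it is a one-line fix.
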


It follows from Proposition \ref{Proposition4-2} that there is only one active eigenmode, i.e. beamforming is optimal, if $g_2 > \epsilon$ and
\begin{align}
\label{eq4-7}
P_T \le \frac{\epsilon + g_1}{2\epsilon g_1} \left(\sqrt{1+ \frac{4\epsilon g_1}{(\epsilon + g_1)^2} \frac{g_1 - g_2}{g_{2} - \epsilon}} -1 \right)
\end{align}
e.g. in the low SNR regime (note however that the single-mode regime extends well beyond low SNR if $\epsilon \rightarrow g_2$ and $g_1 > g_2$), or at any SNR if $g_{1} > \epsilon$ and $g_{2} \le \epsilon$.

While it is difficult to evaluate $\lambda$ analytically from the power constraint, Property 4 ensures that any suitable numerical algorithm (e.g. Newton-Raphson method) will do so efficiently.

As a side benefit of Proposition \ref{Proposition4-2}, one can use \eqref{eq4-5} as a condition for having $m_+$ active eigenmodes under the regular eigenmode transmission (no eavesdropper) with the standard water-filling by taking $\epsilon \rightarrow 0$ in \eqref{eq4-6}:
\begin{align}
\label{eq4-8}
P_{m_+} = \sum_{i=1}^{m_+ -1} \left( \frac{1}{g_{m_+}} - \frac{1}{g_i}\right)
\end{align}
and \eqref{eq4-8} approximates \eqref{eq4-6} when the eavesdropper is weak, $\epsilon \ll g_{m+}$. To the best of our knowledge, expression \eqref{eq4-8} for the threshold powers of the standard water-filling has not appeared in the literature before.

\subsection{High SNR regime}

Let us now consider the isotropic eavesdropper model when the SNR grows large, so that $g_i \lambda_i^* \gg 1, \epsilon \lambda_i^* \gg 1$. In this case, \eqref{eq4-3} simplifies to

\begin{eqnarray}
\label{eq4-9}
C^*_{\infty} = \sum_{i: g_i > \epsilon} \ln \frac{g_i}{\epsilon}
\end{eqnarray}
where the summation is over active eigenmodes only, so that the capacity is independent of the SNR (saturation effect) and the impact of the eavesdropper is the multiplicative SNR loss, which is never negligible. To obtain a threshold value of $P_T$ at which the saturation takes place, observe that $\lambda\rightarrow 0$ as $P_T \rightarrow\infty$ so that \eqref{eq4-4} becomes
\begin{eqnarray}
\label{eq4-10}
\lambda_i^* = P_T \sqrt{\epsilon^{-1} - g_i^{-1}}/\beta(1+o(1))
\end{eqnarray}
for $i: g_i > \epsilon$, where $\beta =  \sum_{i: g_i > \epsilon} \sqrt{\epsilon^{-1} - g_i^{-1}}$ and $\sqrt{\lambda} = \beta P_T^{-1}(1+o(1))$ from the total power constraint. Using \eqref{eq4-10}, the capacity becomes
\begin{eqnarray}
\label{eq4-12}
C^*(\epsilon) = \sum_{i: g_i > \epsilon} \ln \frac{g_i}{\epsilon} - \frac{\beta^2}{P_T} +o\left(\frac{1}{P_T}\right)
\end{eqnarray}
which is a refinement of \eqref{eq4-9}. The saturation takes place when the second term is much smaller than the first one, so that
\begin{eqnarray}
\label{eq4-13}
P_T \gg \beta^2 / \sum_{i: g_i > \epsilon} \ln \frac{g_i}{\epsilon}
\end{eqnarray}
and $C^*(\epsilon) \approx C^*_{\infty}$ under this condition. This effect in illustrated in Fig. 3. Note that, from \eqref{eq4-10}, the optimal power allocation behaves almost like water-filling in this case, due to the $\sqrt{\epsilon^{-1} - g_i^{-1}}$ term.

Using \eqref{eq4-9}, the gap $\Delta C^*_{\infty}$ between the lower and upper bounds in \eqref{eq4-2} becomes
\begin{eqnarray}
\label{eq4-14} \notag
\Delta C^*_{\infty} &=& C^*_{\infty}(\epsilon_m) - C^*_{\infty}(\epsilon_1) \\
&=& m_1 \ln \frac{\epsilon_1}{\epsilon_m} + \sum_{i=m_1+1}^{m_2} \ln \frac{g_i}{\epsilon_m}
\end{eqnarray}
where $m_1$ and $m_2$ are the numbers of active eigenmodes when $\epsilon = \epsilon_1$ and $\epsilon = \epsilon_m$. Note that this gap is SNR-independent and if $m_1=m_2 = m_+$, which is the case if $g_{m+} > \epsilon_1$, then
\begin{eqnarray}
\label{eq4-15}
\Delta C^*_{\infty} = m_+ \ln \frac{\epsilon_1}{\epsilon_m}
\end{eqnarray}
i.e. also independent of the eigenmode gains of the legitimate user and is determined solely by the condition number of the eavesdropper channel and the number of active eigenmodes. Note that, in this case, the upper bounds in \eqref{eq4-4a} are tight.

\begin{figure}[htbp]
\centerline{\includegraphics[width=3in]{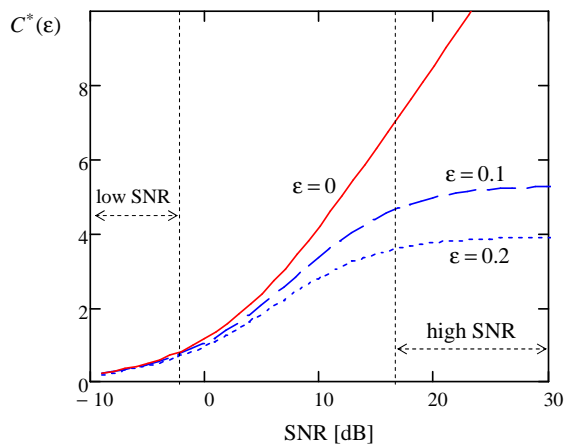}} \caption{Secrecy capacity for the isotropic eavesdropper and the capacity of the regular MIMO channel (no eavesdropper, $\epsilon=0$) vs. the SNR ($=P_T$ since the noise variance is unity); $g_1=2, \ g_2=1$. Note the saturation effect at high SNR , where the capacity strongly depends on $\epsilon$ but not the SNR, and the negligible impact of the eavesdropper at low SNR.}
\label{fig:Fig_3}
\end{figure}

\subsection{When is the eavesdropper's impact negligible?}

It is clear from \eqref{eq4-3} that under fixed $\{g_i\}$ and $P_T$, the secrecy capacity converges to the conventional one $C^*(0)$ as $\epsilon \rightarrow 0$. However, no fixed $\epsilon$ (does not matter how small) can ensure by itself that the eavesdropper's impact on the capacity is negligible since one can always select sufficiently high $P_T$ to make the saturation effect important (see Fig. 3). To answer the question in the section's title, we use \eqref{eq4-3} to obtain:
\begin{align}
\notag
C^*(\epsilon) &= \mathop {\max }\limits_{\{\lambda_i\}} \sum_{i} \ln\left(1+ \frac{1+(g_i-\epsilon)\lambda_i}{1+ \epsilon \lambda_i}\right) \ \mbox{s.t.} \ \lambda_i \ge 0, \sum_i \lambda_i = P_T \\
\label{eq4-16a}
&{\mathop \approx \limits^{(a)}} \mathop {\max} \limits_{\{\lambda_i\}} \sum_{i} \ln (1+(g_i-\epsilon) \lambda_i) \\ \notag
&{\mathop \approx \limits^{(b)}} \mathop {\max }\limits_{\{\lambda_i\}} \sum_{i} \ln (1+g_i \lambda_i) = C^*(0) \end{align}
where (a) holds if
\begin{align}
\label{eq4-17}
P_T \ll 1/\epsilon
\end{align}
(since $\lambda_i \le P_T$), i.e. if the SNR is not too large, and (b) holds if
\begin{align}
\label{eq4-18}
\epsilon \ll g_{i}
\end{align}
for all active eigenmodes, i.e. if the eavesdropper is much weaker than the legitimate active eigenmodes. It is the combination of \eqref{eq4-17} and \eqref{eq4-18} that ensures that the eavesdropper's impact is negligible. Neither condition alone is able to do so. Fig. 3 illustrates this point. Eq. \eqref{eq4-16a} also indicates that the impact of the eavesdropper is the per-eigenmode gain loss of $\epsilon$. Unlike the high-SNR regime in \eqref{eq4-9} where the loss is multiplicative (i.e. very significant and never negligible), here it is additive (mild or negligible in many cases).

\subsection{Low SNR regime}

Let us now consider the low-SNR regime, which is characteristic for CDMA-type systems \cite{Tse}. Traditionally, this regime is defined via $P_T \rightarrow 0$. We, however, use a more relaxed definition requiring that $m_+ =1$, which holds under \eqref{eq4-7}. In this regime, assuming $g_1 > \epsilon$,
\begin{eqnarray}
\label{eq4-19} \notag
C^*(\epsilon) &=& \ln \frac{1+g_1 P_T}{1+\epsilon P_T} = \ln \left(1+ \frac{(g_1- \epsilon )P_T}{1+\epsilon P_T}\right) \\
&{\mathop \approx \limits^{(a)}}& \ln (1+(g_1- \epsilon )P_T)
\end{eqnarray}
where (a) holds when $P_T \ll 1/\epsilon$. It is clear from the last expression that the impact of the eavesdropper is an additive SNR loss of $\epsilon P_T$, which is negligible when $\epsilon \ll g_1$. Note a significant difference to the high SNR regime in \eqref{eq4-9}, where this impact is never negligible. Fig. 3 illustrates this difference.

It follows from \eqref{eq4-19}(a) that the difference between the lower and upper bounds in \eqref{eq4-2} at low SNR is the SNR gap of $(\epsilon_1 -\epsilon_m )P_T$. This difference is negligible if $g_1 \gg \epsilon_1 -\epsilon_m$, which may be the case even if the condition number $\epsilon_1/\epsilon_m$ is large (in which case the difference is significant at high SNR, see \eqref{eq4-15}). Therefore, we conclude that the impact of the eavesdropper is more pronounced in the high-SNR regime and is negligible in the low-SNR one if its channel is weaker than the strongest eigenmode  of the legitimate user, $g_1 \gg \varepsilon_1$.

When $g_1- \epsilon \ll 1/P_T$, \eqref{eq4-19}(a) gives $C^*(\epsilon) \approx (g_1- \epsilon) P_T$, i.e. linear in $P_T$. A similar capacity scaling at low SNR has been obtained in \cite{Rezki} for i.i.d. block-fading single-input single-output (SISO) WTC, without however explicitly identifying the capacity but via establishing upper/lower bounds. Also note that the 1st two equalities in \eqref{eq4-19} do not require $P_T \rightarrow 0$ but only to satisfy \eqref{eq4-7}.

\vspace*{0.5\baselineskip}
\section{Omnidirectional Eavesdropper}
\label{sec:omnidirectional}

In this section, we consider a scenario where the eavesdropper has equal gain in all directions of a certain subspace. This model accounts for 2 points: (i) when the transmitter has no particular knowledge about the directional properties of the eavesdropper, which is most likely from the practical perspective, it is reasonable to assume that its gain is the same in all directions; (ii) on the other hand, when the eavesdropper has a small number of antennas (less than the number of transmit antennas), its channel rank, which does not exceed the number of transmit or receive antennas, is limited by this number so that the isotropic model of the previous section does not apply\footnote{This was pointed out by A. Khisti.}.

For an omnidirectional eavesdropper, its channel gain is the same in all directions of its active subspace, i.e.
\bal
\label{eq.omni.Ev.1}
|\bH_2 \bx|^2  = \bx^{\dag}\bW_2\bx = \mathrm{const} \ \forall \bx \in \mathcal{N}(\bW_2)^{\bot}
\eal
where $\mathcal{N}(\bW_2)^{\bot}$ is the subspace orthogonal to the nullspace $\mathcal{N}(\bW_2)$ of $\bW_2$, i.e. its active subspace, whose dimensionality is $r_2 = \mathrm{rank}(\bW_2)$. In particular, when the eavesdropper is isotropic, $\mathcal{N}(\bW_2)$ is empty so that $\mathcal{N}(\bW_2)^{\bot}$ is the entire space and $r_2=m$. The condition in \eqref{eq.omni.Ev.1} implies that
\bal
\label{eq.omni.Ev.2}
\bW_2 = \varepsilon \bU_{2+}\bU_{2+}^{\dag}
\eal
where $\bU_{2+}$ is a semi-unitary matrix whose columns are the active eigenvectors of $\bW_2$, and $\mathcal{N}(\bW_2)^{\bot} = \mathrm{span}\{\bU_{2+}\}$. Note that the model in \eqref{eq.omni.Ev.2} allows $\bW_2$ to be rank-deficient: $r_2 <m$ is allowed. $\varepsilon$ can be evaluated from e.g. \eqref{eq.iso.2}:  $\varepsilon = \alpha n_2 m R_{2\min}^{-\nu}$.

\begin{thm}
\label{thm.omnidirectional}
Under the omnidirectional eavesdropper setting in \eqref{eq.omni.Ev.1}, \eqref{eq.omni.Ev.2} and when $\mathcal{R}(\bW_1) \subseteq \mathcal{R}(\bW_2)$, the MIMO-WTC secrecy capacity can be expressed as follows:
\bal
\label{eq.omni.Ev.T1} %\notag
 C_{s} = \max_{\tr\bR \le P_T}\ln \frac{| \bI+ \bW_1 \bR |}{|\bf I+ \bW_2 \bf R|} = \max_{\tr \bR \le P_T} \ln \frac{|\bI+\bW_1\bR|}{|\bI+ \epsilon \bR|} = C^*(\epsilon)
\eal
i.e. the capacity and optimal signaling to achieve it are the same as for the isotropic eavesdropper as in Proposition \ref{prop.isotropic}.
\end{thm}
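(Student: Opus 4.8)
The plan is to reduce the omnidirectional case directly to the isotropic case of Proposition \ref{prop.isotropic} by exploiting the structure $\bW_2 = \varepsilon \bU_{2+}\bU_{2+}^{\dag}$ together with the range condition $\mathcal{R}(\bW_1) \subseteq \mathcal{R}(\bW_2)$. The middle equality in \eqref{eq.omni.Ev.T1} is trivial once we know that the maximizing $\bR$ may be taken to lie in the active subspace $\mathcal{N}(\bW_2)^{\bot} = \mathrm{span}\{\bU_{2+}\}$, since on that subspace $\bW_2$ acts exactly as $\varepsilon \bI$. So the core of the argument is: (i) show that restricting $\bR$ to the active subspace of $\bW_2$ costs nothing, and (ii) verify that under this restriction the problem is literally the isotropic-eavesdropper problem (possibly in a lower-dimensional ambient space), so Proposition \ref{prop.isotropic} applies verbatim.

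First I would argue step (i). Let $\bP = \bU_{2+}\bU_{2+}^{\dag}$ be the orthogonal projector onto $\mathcal{N}(\bW_2)^{\bot}$ and write any feasible $\bR \ge 0$ as a block form relative to the decomposition $\mathbb{C}^m = \mathcal{R}(\bW_2) \oplus \mathcal{N}(\bW_2)$. The key observation is that $\mathcal{R}(\bW_1) \subseteq \mathcal{R}(\bW_2)$ forces $\bW_1$ to vanish on $\mathcal{N}(\bW_2)$, i.e. $\bW_1 = \bW_1 \bP = \bP\bW_1$, so both $\bW_1$ and $\bW_2$ are supported on the same subspace. Then I would use the following standard fact: for Hermitian $\bW \ge 0$ supported on the range of $\bP$, one has $|\bI + \bW\bR| = |\bI + \bW \bP\bR\bP|$ (only the compressed block $\bP\bR\bP$ of $\bR$ enters), which follows from $\bW\bR = \bW\bP\bR$ and a Schur-complement / determinant identity, or simply from $|\bI+\bW\bR| = |\bI + \bW^{1/2}\bR\bW^{1/2}|$ and $\bW^{1/2} = \bW^{1/2}\bP$. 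Applying this to both numerator and denominator of $C(\bR)$ shows $C(\bR) = C(\bP\bR\bP)$, and since $\tr(\bP\bR\bP) \le \tr\bR \le P_T$ and $\bP\bR\bP \ge 0$, the matrix $\bP\bR\bP$ is feasible and achieves the same objective. Hence the maximization may be restricted to covariances supported on $\mathrm{span}\{\bU_{2+}\}$.

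Next, step (ii): on that subspace, writing $\bR = \bU_{2+}\widetilde{\bR}\bU_{2+}^{\dag}$ with $\widetilde{\bR} \ge 0$ an $r_2 \times r_2$ matrix and $\tr\widetilde{\bR}\le P_T$, we get $\bI + \bW_2\bR$ restricted to the active subspace equal to $\bI + \varepsilon\widetilde{\bR}$ (the orthogonal complement contributes unit determinant factors), and $\bI + \bW_1\bR$ restricted similarly gives $\bI + \widetilde{\bW}_1\widetilde{\bR}$ where $\widetilde{\bW}_1 = \bU_{2+}^{\dag}\bW_1\bU_{2+}$. So $C(\bR) = \ln\frac{|\bI + \widetilde{\bW}_1\widetilde{\bR}|}{|\bI + \varepsilon\widetilde{\bR}|}$, which is exactly the isotropic-eavesdropper objective in dimension $r_2$ with legitimate Gram matrix $\widetilde{\bW}_1$. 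Since $\varepsilon\bI$ obviously acts as an isotropic eavesdropper on $\mathbb{C}^{r_2}$, Proposition \ref{prop.isotropic} gives the optimal value $C^*(\epsilon)$ and the optimal signaling on the eigenvectors of $\widetilde{\bW}_1$; mapping back through $\bU_{2+}$, this says the optimal $\bR^*$ signals on the eigenvectors of $\bW_1$ that lie in $\mathcal{R}(\bW_2)$ — which, by the range condition, are all of the nonzero-eigenvalue eigenvectors of $\bW_1$. Since $\bW_1$ and $\varepsilon\bI$ commute in the relevant sense, the eigenvalues $g_i$ appearing in Proposition \ref{prop.isotropic} are just the eigenvalues of $\bW_1$, and the formula $C_s = C^*(\epsilon)$ in \eqref{eq.omni.Ev.T1} follows.

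I expect the main obstacle to be the bookkeeping around rank-deficiency in step (i): making precise that $\mathcal{R}(\bW_1)\subseteq\mathcal{R}(\bW_2)$ is exactly what is needed so that the part of $\bR$ living on $\mathcal{N}(\bW_2)$ is useless (it contributes to neither $C_1$ nor $C_2$ but wastes power), and that without this condition a nonzero $\mathcal{N}(\bW_2)$-component of $\bR$ could strictly increase $C_1$ while leaving $C_2$ unchanged, breaking the reduction. The determinant identity $|\bI+\bW\bR| = |\bI+\bP\bR\bP\,\bW|$ for $\bW$ supported on $\mathcal{R}(\bP)$ should be stated cleanly (e.g. via $|\bI+\bA\bB| = |\bI+\bB\bA|$ and $\bW^{1/2}=\bP\bW^{1/2}$) rather than argued ad hoc. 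Everything else is a direct appeal to Proposition \ref{prop.isotropic}.
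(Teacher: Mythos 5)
Your reduction is correct and follows the same skeleton as the paper's proof: project onto $\mathcal{R}(\bW_2)$ via $\bP=\bU_{2+}\bU_{2+}^{\dag}$, use the range condition to get $\bW_k=\bP\bW_k\bP$, pass to the compressed $r_2\times r_2$ problem with $\widetilde{\bW}_1=\bU_{2+}^{\dag}\bW_1\bU_{2+}$ and $\widetilde{\bW}_2=\varepsilon\bI$, check $\tr(\bP\bR\bP)\le\tr\bR$, and invoke Proposition \ref{prop.isotropic}. Where you genuinely diverge is the endgame. The paper proves the two directions separately: a lower bound $C_s\ge C^*(\epsilon)$ from $\bW_2\le\varepsilon\bI$ and matrix monotonicity of the log-det, and an upper bound $C_s\le\widetilde{C}^*(\varepsilon)\le C^*(\varepsilon)$ where the last step uses the Cauchy interlacing inequality $\lambda_i(\widetilde{\bW}_1)\le\lambda_i(\bW_1)$; the theorem then follows by sandwiching. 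You instead observe that the restriction to the active subspace is \emph{lossless} (since $C(\bR)=C(\bP\bR\bP)$) and that the compressed problem is not merely dominated by but \emph{equal to} the $m$-dimensional isotropic one, because the nonzero eigenvalues of $\widetilde{\bW}_1$ coincide with those of $\bW_1$. That identification is true and is the cleaner route, but your stated justification (``$\bW_1$ and $\varepsilon\bI$ commute in the relevant sense'') is not the right reason; the correct one, which you already have the ingredients for, is that the range condition gives $\bW_1=\bU_{2+}\widetilde{\bW}_1\bU_{2+}^{\dag}$, and $|\bI+\bA\bB|=|\bI+\bB\bA|$ (or Sylvester's theorem) with $\bU_{2+}^{\dag}\bU_{2+}=\bI_{r_2}$ shows that $\bW_1$ and $\widetilde{\bW}_1$ have the same nonzero spectrum, while the extra zero eigenvalues are irrelevant since only modes with $g_i>\epsilon$ receive power in \eqref{eq4-4}. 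With that one sentence tightened, your argument is complete and arguably sharper than the paper's: it makes explicit that nothing is lost in the projection, whereas the paper's interlacing bound only shows nothing is gained.
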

\begin{proof}
First note that, for the omnidirectional eavesdropper, $\bW_2 \le \varepsilon\bI$ so that $|\bI+\bW_2\bR| \le |\bI+\varepsilon\bR|$ and hence
\bal
%\label{eq.omni.Ev.T1}
 C_{s} &= \max_{\tr\bR \le P_T}\ln \frac{| \bI+ \bW_1 \bR |}{|\bf I+ \bW_2 \bf R|} \ge \max_{\tr \bR \le P_T} \ln \frac{|\bI+\bW_1\bR|}{|\bI+ \epsilon \bR|} = C^*(\epsilon)
\eal

To prove the reverse inequality, let $\bP_2$ be a projection matrix on $\mathcal{R}(\bW_2)$, i.e. $\bP_2 = \bU_{2+}\bU_{2+}^{\dag}$. Then, $\bP_2 \bW_k \bP_2 = \bW_k, k=1,2$, so that
\bal
%\label{eq.omni.Ev.T4}
 C(\bR) = \ln \frac{| \bI+ \bP_2 \bW_1 \bP_2 \bR |}{|\bI+ \bP_2 \bW_2 \bP_2 \bf R|} = \ln \frac{| \bI+ \tilde{\bW_1}\tilde{ \bR}|}{|\bI+ \varepsilon\tilde{\bR}|} = \tilde{C}(\tilde{\bR})
\eal
where $\tilde{\bR} = \bU_{2+}^{\dag} \bR \bU_{2+}$ and likewise for $\tilde{\bW_k}$, so that $\tilde{\bW_2} = \varepsilon\bI$, where we used $| \bI+ \bA\bB| = | \bI+ \bB\bA|$. Further note that
\bal
%\label{eq.omni.Ev.T5}
\tr \tilde{\bR} &= \tr \bU_{2+}^{\dag}\bR\bU_{2+} \\
&= \sum_i \lambda_i(\bR) |\bu_{2i}^{\dag} \bu_{Ri}|^2\\
&\le \sum_i \lambda_i(\bR) = \tr \bR \le P_T
\eal
where $\bu_{2i}$ and $\bu_{Ri}$ are $i$-th eigenvectors of $\bW_2$ and $\bR$, and we have used $\bR = \sum_i \lambda_i(\bR) \bu_{Ri}\bu_{Ri}^{\dag}$ and $|\bu_{2i}^{\dag} \bu_{Ri}|^2 \le |\bu_{2i}|^2 |\bu_{Ri}|^2 =1$. Hence, $\tilde{\bR}$ satisfies power constraint if $\bR$ does and thus
\bal
%\label{eq.omni.Ev.T6}
 C_s &= \max_{\tr \bR \le P_T} C(\bR) \le \max_{\tr \tilde{\bR} \le P_T} \tilde{C}(\tilde{\bR}) = \max_{\lambda_i \ge 0,\ \sum_i \lambda_i \le P_T} \sum_i \ln \frac{1+\tilde{g}_i \lambda_i}{1+ \epsilon \lambda_i} = \tilde{C}^*(\varepsilon)
\eal
where $\tilde{g}_i=\lambda_i(\tilde{\bW}_1)$, and $\tilde{C}^*(\varepsilon)$ is the secrecy capacity under $\tilde{\bW}_1$ and isotropic eavesdropper $\tilde{\bW}_2=\varepsilon\bI$. Note that
\bal
\lambda_i(\tilde{\bW_1}) &= \lambda_i(\bU_{2+}^{\dag} \bW_1 \bU_{2+})= \lambda_i([\bU_{2}^{\dag} \bW_1 \bU_{2}]_{r_2 \times r_2}) \le \lambda_i(\bU_{2}^{\dag} \bW_1 \bU_{2}) = \lambda_i(\bW_1)
\eal
where $[\bA]_{k\times k}$ denotes $k\times k$ principal sub-matrix of $\bA$, $r_2 = \mathrm{rank}(\bW_2)$, and $\bU_{2}$ is a unitary matrix whose columns are the eigenvectors of $\bW_2$. The inequality is due to Cauchy eigenvalue interlacing theorem \cite{Horn-1} and the last equality is due to the fact that $\bU_{2} \bW_1 \bU_{2}^{\dag}$ and $\bW_1$ have the same eigenvalues. Based on this, one obtains:
\bal
%\label{eq.omni.Ev.T6}
 C_s &\le \tilde{C}^*(\varepsilon) \le  \max_{\lambda_i \ge 0,\ \sum_i \lambda_i \le P_T} \sum_i \ln \frac{1+g_i \lambda_i}{1+ \epsilon \lambda_i} = C^*(\varepsilon)
\eal
thus establishing $C_{s} = C^*(\varepsilon)$ under an omnidirectional eavesdropper with $\mathcal{R}(\bW_1) \subseteq \mathcal{R}(\bW_2)$.
\end{proof}

Note that the secrecy capacity as well as the optimal signaling for the omnidirectional eavesdropper in Theorem \ref{thm.omnidirectional} is the same as those for the isotropic one (which is not the case in general, as can be shown via examples), i.e. the fact that the rank of the eavesdropper channel is low has no impact provided that $\mathcal{R}(\bW_1) \subseteq \mathcal{R}(\bW_2)$ holds.

Since $\mathcal{R}(\bW)$ collects directions where the channel gain is not zero:
\bal
|\bH \bx|^2  = \bx^{\dag}\bW\bx \neq 0 \ \forall \bx \in \mathcal{R}(\bW)
\eal
the condition $\mathcal{R}(\bW_1) \subseteq \mathcal{R}(\bW_2)$ means that $|\bH_2 \bx|=0$ implies $|\bH_1 \bx|=0$ (but the converse is not true in general) and hence $|\bH_1 \bx|\neq 0$ implies $|\bH_2 \bx|\neq 0$, i.e. the eavesdropper can "see" in any direction where the receiver can "see" (but there is no requirement here for the eavesdropper to be degraded with respect to the receiver so that the channel is not necessarily degraded).

Further note that the condition in \eqref{eq.omni.Ev.1} does not require $\bU_2 = \bU_1$, i.e. the eigenvectors of the legitimate channel and of the eavesdropper can be different.

\vspace*{0.5\baselineskip}
\section{Identical Right Singular Vectors}
\label{sec:Identical Right Singular Vectors}

In this section, we consider the case when $\bH_{1,2}$ have the same right singular vectors (SV), so that their singular value decomposition takes the following form:
\bal
\label{eq.SV.1}
\bH_k = \bU_k \bSigma_k \bV^{\dag}
\eal
where the unitary matrices $\bU_k, \bV$ collect left and right singular vectors respectively and diagonal matrix $\bSigma_k$ collects singular values of $\bH_k$. In this model, the left singular vectors can be arbitrary. This is motivated by the fact that right singular vectors are determined by scattering  around the Tx while left ones - by scattering around the Rx and eavesdropper respectively. Therefore, when the Rx and eavesdropper are spatially separated, their scattering environments may differ significantly (and hence different left SVs) while the same scattering environment around the Tx induces the same right SVs. We make no weak eavesdropper or other assumptions here. After unitary (and thus information-preserving) transformations, this scenario can be put into the parallel channel setting of \cite{Khisti-08}\cite{Li-10}. The secrecy capacity and the optimal covariance in this case can be explicitly characterized as follows.

\begin{prop}
\label{prop.SV}
Consider the wiretap MIMO channel as in \eqref{eq1}, \eqref{eq.SV.1}. The optimal Tx covariance for this channel takes the following form:
\bal
\label{eq.SV.R*}
\bR^* = \bV \bLam^* \bV^{\dag}
\eal
where the diagonal matrix $\bLam^*$ collects its eigenvalues $\lambda^*_i$:
\bal
\label{eq.SV.lambda*}
\lambda^*_i = \frac{\lambda_{2i}+\lambda_{1i}}{2\lambda_{2i}\lambda_{1i}}\Lb \sqrt{1+\frac{4\lambda_{2i}\lambda_{1i}} {(\lambda_{2i}+\lambda_{1i})^2} \Lb \frac{\lambda_{1i}-\lambda_{2i}}{\lambda} - 1 \Rb_+} - 1 \Rb
\eal
and where $\lambda_{ki}=\sigma_{ki}^2$ and $\sigma_{ki}$ denotes singular values of $\bH_{k}$; $\lambda > 0$ is found from the total power constraint: $\sum_i \lambda_i^* = P_T$.
\end{prop}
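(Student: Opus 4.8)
The plan is to reduce this channel to a parallel (diagonal) channel by exploiting the common right singular vectors, and then invoke the known optimality of independent signaling on parallel wiretap channels together with the per-subchannel power allocation that has already been derived for the isotropic eavesdropper case (Proposition \ref{prop.isotropic}, eq. \eqref{eq4-4}). Concretely, first I would apply the unitary transmit precoding $\bx = \bV\bx'$, which is information-preserving and keeps the power constraint intact since $\tr(\bV\bR'\bV^{\dag}) = \tr\bR'$. Under this transformation $\bW_k = \bH_k^{\dag}\bH_k = \bV\bSigma_k^{\dag}\bSigma_k\bV^{\dag}$, so $\bV^{\dag}\bW_k\bV = \bLam_k$ is diagonal with entries $\lambda_{ki} = \sigma_{ki}^2$. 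Hence $C(\bR) = \ln\frac{|\bI+\bLam_1\bR'|}{|\bI+\bLam_2\bR'|}$ with $\bR' = \bV^{\dag}\bR\bV$, i.e. the channel is now a set of parallel Gaussian wiretap subchannels with gains $(\lambda_{1i},\lambda_{2i})$.

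Next I would argue that the optimal $\bR'$ is diagonal, so that the optimization decouples across subchannels. This is exactly the statement that independent signaling is optimal for parallel wiretap channels, established in \cite{Khisti-08}\cite{Li-10}, which the text already flagged as applicable here; alternatively it follows by a Hadamard-type argument: $|\bI+\bLam_1\bR'| \le \prod_i (1 + \lambda_{1i} r'_{ii})$ while $|\bI+\bLam_2\bR'|$ can be lower-bounded suitably, though the cleanest route is simply to cite the parallel-channel result. Once diagonality is known, the problem becomes $\max \sum_i \ln\frac{1+\lambda_{1i}\lambda^*_i}{1+\lambda_{2i}\lambda^*_i}$ subject to $\lambda^*_i \ge 0$, $\sum_i \lambda^*_i \le P_T$, which is structurally identical to the scalar allocation problem solved inside the proof of Proposition \ref{prop.isotropic}: there the pair was $(g_i,\epsilon)$ with $\epsilon$ common, here it is $(\lambda_{1i},\lambda_{2i})$ with $\lambda_{2i}$ varying with $i$, but the per-subchannel first-order (KKT) analysis is the same since the objective separates. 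Writing out the KKT stationarity condition for each active subchannel gives a quadratic in $\lambda^*_i$ whose positive root is precisely \eqref{eq.SV.lambda*}, with the water-level $\lambda$ (the common Lagrange multiplier) fixed by $\sum_i \lambda^*_i = P_T$; the $(\cdot)_+$ accounts for the complementary-slackness cases where the subchannel is switched off, which happens iff $\lambda_{1i} - \lambda_{2i} < \lambda$. Undoing the precoding, $\bR^* = \bV\bLam^*\bV^{\dag}$, giving \eqref{eq.SV.R*}.

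The main obstacle is justifying rigorously that the optimal $\bR'$ is diagonal, i.e. that no cross-correlation between the parallel subchannels helps — the overall problem is non-convex, so this is not automatic. The safe route is to lean on the cited parallel-channel optimality results \cite{Khisti-08}\cite{Li-10} (as the text already does for the isotropic case) rather than reprove it; if a self-contained argument is wanted, one can note that since $\bW_1,\bW_2$ are simultaneously diagonalized by $\bV$, the earlier structural result that optimal signaling lies on the positive eigenmodes of the difference channel $\bW_1-\bW_2 = \bV(\bLam_1-\bLam_2)\bV^{\dag}$ forces $\bR^*$ to be supported on (a subset of) the columns of $\bV$, and then a direct perturbation/majorization argument on the remaining block removes off-diagonal terms. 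Everything after diagonality is a routine single-variable calculus exercise already carried out for \eqref{eq4-4}, so I would simply refer to that derivation with $\epsilon$ replaced by $\lambda_{2i}$.
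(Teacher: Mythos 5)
Your proposal is correct and follows essentially the same route as the paper: rotate by $\bV$ to reduce the problem to parallel Gaussian wiretap subchannels, invoke the optimality of independent signaling from \cite{Khisti-08}\cite{Li-10} to get diagonality of the rotated covariance, and obtain the per-mode powers from the same allocation already derived for \eqref{eq4-4} with $\epsilon$ replaced by $\lambda_{2i}$. Your candid flagging of the diagonality step as the non-trivial part, and your decision to lean on the cited parallel-channel results rather than reprove it, matches exactly what the paper does.
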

\begin{proof}
Under \eqref{eq.SV.1}, $\bW_k = \bV \bLam_k \bV^{\dag}$, where diagonal matrix $\bLam_k = \bSigma_k^{\dag}\bSigma_k$ collects eigenvalues of $\bW_k$, so that the problem in \eqref{eq3} can be re-formulated as
\bal
C_s = \max_{\tr \tilde{\bR} \ge \bf{0}} \ln\frac{|\bI+\bLam_1\tilde{\bR}|}{|\bI+\bLam_2\tilde{\bR}|} \ \mbox{s.t.} \ \tr \tilde{\bR} \le P_T
\eal
where $\tilde{\bR} = \bV^{\dag}\bR\bV$. However, this is the secrecy capacity of a set of parallel Gaussian wire-tap channels as in \cite{Khisti-08}\cite{Li-10}, for which independent signaling is known to be optimal\footnote{The authors would like to thank A. Khisti for pointing out this line of argument.}, so that maximizing $\tilde{\bR}^*$ is diagonal, from which \eqref{eq.SV.R*} follows. The optimal power allocation in \eqref{eq.SV.lambda*} is essentially the same as for the equivalent parallel channels in \cite{Li-10}.
\end{proof}

In fact, Eq. \eqref{eq.SV.R*} says that optimal signaling is on the right SVs of $\bH_{1,2}$ and \eqref{eq.SV.lambda*} implies that only those eigenmodes are active for which
\bal
\label{eq.SV.sigma}
\sigma_{1i}^2 > \sigma_{2i}^2 + \lambda
\eal
If $\lambda_{2i}=0$, then \eqref{eq.SV.lambda*} reduces to
\bal
\lambda_i^* = \Lb \frac{1}{\lambda} - \frac{1}{\lambda_{1i}}\Rb_+
\eal
i.e. as in the standard WF. This implies that when $\lambda_{2i}=0$ for all active eigenmodes, then the standard WF power allocation is optimal.

It should be stressed that the original channels in \eqref{eq.SV.1} are not parallel (diagonal). They become equivalent to a set of parallel independent channels after performing information-preserving transformations. Also, there is no assumption of degradedness here and no requirement for the optimal covariance to be of full rank or rank-1.

\vspace*{0.5\baselineskip}
\section{When Is ZF Signaling Optimal?}
\label{sec:ZF}

In this section, we consider the case when ZF signaling is optimal, i.e. when active eigenmodes of the optimal covariance $\bR^*$ are orthogonal to those of $\bW_2$: $\bW_2\bR^*=\bf{0}$\footnote{This simply means that the Tx antenna array puts null in the direction of eavesdropper, which is known as null forming in antenna array literature \cite{VanTrees}. This can also be considered as a special case of interference alignment, so that Proposition \ref{prop.ZF} establishes its optimality.}. It is clear that this does not hold in general. However, the importance of this scenario is coming from the fact that such signaling does not require wiretap codes: since the eavesdropper gets no signal, regular coding on the required channel suffices. Hence, the system design follows the well-established standard framework and secrecy requirement imposes no extra complexity penalty but is rather ensured by the well-established ZF signaling.

\begin{prop}
\label{prop.ZF}
A sufficient condition for Gaussian ZF signaling being optimal for the Gaussian MIMO-WTC in \eqref{eq1} is that $\bW_1$ and $\bW_2$ have the same eigenvectors or, equivalently, $\bH_1$ and $\bH_2$ have the same right singular vectors as in \eqref{eq.SV.1}, and
\bal
\label{eq.ZF.1}
\lambda_{1i} \le \lambda_{2i}+\lambda \ \mbox{if} \ \lambda_{2i} >0,
\eal
where $\lambda$ is found from the total power constraint $\sum_{i} \lambda_i^*=P_T$, and
\bal
\label{eq.ZF.2}
\lambda_i^* = \lambda_i(\bR^*) = \Lb \frac{1}{\lambda} - \frac{1}{\lambda_{1i}}\Rb_+ \ \mbox{if} \ \lambda_{2i}=0,
\eal
and 0 otherwise. The optimal covariance is as in \eqref{eq.SV.R*} so that its eigenvectors are those of $\bW_1$ and $\bW_2$.

A necessary condition of ZF optimality is that the active eigenvectors of $\bR^*$ are also the active eigenvectors of $\bW_1$ and the inactive eigenvectors of $\bW_2$, and that the power allocation is given by \eqref{eq.ZF.2}.

\end{prop}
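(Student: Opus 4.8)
The plan is to establish the sufficient condition first and then the necessary condition. For sufficiency, suppose $\bW_1$ and $\bW_2$ share eigenvectors, equivalently $\bH_1,\bH_2$ have common right singular vectors as in \eqref{eq.SV.1}. Then Proposition \ref{prop.SV} applies verbatim: the optimal covariance is $\bR^*=\bV\bLam^*\bV^{\dag}$ with eigenvalues $\lambda^*_i$ given by \eqref{eq.SV.lambda*}. I would then simply observe that, under the hypothesis \eqref{eq.ZF.1}, the "water-filling" argument of \eqref{eq.SV.lambda*} returns $\lambda^*_i=0$ on every eigenmode for which $\lambda_{2i}>0$: indeed $\lambda^*_i>0$ requires $\lambda_{1i}-\lambda_{2i}>\lambda$ by \eqref{eq.SV.sigma}, which is exactly negated by \eqref{eq.ZF.1}. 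Hence $\bR^*$ is supported only on eigenmodes with $\lambda_{2i}=0$, i.e. $\bW_2\bR^*=\bf 0$, so the signaling is ZF. On the surviving eigenmodes ($\lambda_{2i}=0$) formula \eqref{eq.SV.lambda*} collapses to the standard water-filling expression \eqref{eq.ZF.2} (the degenerate $\lambda_{2i}\to 0$ limit already computed in the text right after \eqref{eq.SV.sigma}), and $\lambda$ is pinned by $\sum_i\lambda^*_i=P_T$. This is essentially a bookkeeping step once Proposition \ref{prop.SV} is in hand.

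For the necessary condition, I would argue directly from the definition of ZF optimality, $\bW_2\bR^*=\bf 0$, without assuming shared eigenvectors. If $\bW_2\bR^*=\bf 0$, then $|\bI+\bW_2\bR^*|=1$, so $C(\bR^*)=\ln|\bI+\bW_1\bR^*|=C_1(\bR^*)$. Since $C(\bR)\le C_1(\bR)$ for every feasible $\bR$ (as $|\bI+\bW_2\bR|\ge 1$), the optimal ZF $\bR^*$ must in fact maximize $C_1(\bR)$ over the restricted feasible set $\{\bR\ge 0:\tr\bR\le P_T,\ \bW_2\bR=\bf 0\}$; but because $\bR^*$ already achieves the unconstrained secrecy capacity and $C\le C_1$ globally, it must maximize $C_1$ over the full feasible set $\{\bR\ge 0:\tr\bR\le P_T\}$ as well. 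The unconstrained maximizer of $\ln|\bI+\bW_1\bR|$ is the classical eigenmode water-filling on $\bW_1$: its active eigenvectors are the strongest eigenvectors of $\bW_1$ and its eigenvalues are $(\,1/\lambda-1/\lambda_{1i}\,)_+$. Combining this with $\bW_2\bR^*=\bf 0$, which forces $\sR(\bR^*)\subseteq\sN(\bW_2)$, gives precisely the stated conclusion: the active eigenvectors of $\bR^*$ are active eigenvectors of $\bW_1$ and inactive eigenvectors of $\bW_2$, with power allocation \eqref{eq.ZF.2}.

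The one point requiring care — and the main obstacle — is the uniqueness/characterization of the $C_1$-maximizer when $\bW_1$ is singular or has repeated eigenvalues, since then the optimal $\bR^*$ for the standard problem is not unique in its eigenvectors and the phrase "active eigenvectors of $\bR^*$ are active eigenvectors of $\bW_1$" must be interpreted up to the usual freedom of choosing an orthonormal basis within each eigenspace. I would handle this by phrasing the necessary condition as: $\sR(\bR^*)$ is contained in the span of the top eigenvectors of $\bW_1$ selected by the water level $\lambda$ and is contained in $\sN(\bW_2)$, and on $\sR(\bR^*)$ the eigenvalues are $1/\lambda-1/\lambda_{1i}$, which is exactly \eqref{eq.ZF.2}; the water level is fixed by $\tr\bR^*=P_T$ (using that $C_1$ is strictly increasing in each active power, so the constraint is tight whenever some mode is active). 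A second minor subtlety is verifying that $C\le C_1$ truly holds with equality \emph{only} at ZF points — but this is immediate since $|\bI+\bW_2\bR|=1$ with $\bW_2,\bR\ge 0$ forces $\bW_2^{1/2}\bR\bW_2^{1/2}=\bf 0$, hence $\bW_2\bR=\bf 0$ — so no delicate argument is needed there.
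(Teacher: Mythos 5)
Your sufficiency argument is fine and is in fact a slightly more streamlined route than the paper's: the paper also invokes Proposition \ref{prop.SV} to fix the eigenvectors, but then re-derives the power allocation from the diagonalized KKT conditions, whereas you read everything off the explicit formula \eqref{eq.SV.lambda*} and observe that \eqref{eq.ZF.1} kills every mode with $\lambda_{2i}>0$ via \eqref{eq.SV.sigma}. That part stands.

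The necessity argument, however, has a genuine gap. From $C(\bR)\le C_1(\bR)$ for all feasible $\bR$ and $C(\bR^*)=C_1(\bR^*)$ you conclude that $\bR^*$ maximizes $C_1$ over the \emph{full} feasible set. This does not follow: for any competitor $\bR'$ you only get $C_1(\bR')\le C(\bR^*)+C_2(\bR')$, and $C_2(\bR')$ can be strictly positive, so $C_1(\bR')>C_1(\bR^*)$ is entirely possible. Indeed the conclusion is false in general --- if the strongest eigenmode of $\bW_1$ lies in an active direction of a very strong $\bW_2$, the standard water-filling on $\bW_1$ (the global $C_1$-maximizer) is not ZF, yet ZF on a weaker mode in $\sN(\bW_2)$ can still be secrecy-optimal. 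Consequently your characterization of the active eigenvectors as the ``top eigenvectors of $\bW_1$ selected by the water level'' is stronger than, and different from, what the proposition asserts. Even the repaired version of your argument (maximizing $C_1$ only over $\{\bR:\bW_2\bR=\mathbf{0}\}$) is insufficient: that yields water-filling on the compressed matrix $\bP\bW_1\bP$, with $\bP$ the projector onto $\sN(\bW_2)$, whose eigenvectors need not be eigenvectors of $\bW_1$ at all. The paper instead works from the KKT conditions of the original non-convex problem \eqref{eq3} (which are necessary because the feasible set is compact and Slater holds); at a ZF point the gradient of $C_2$ equals $\bW_2\neq\mathbf{0}$, and it is precisely this term, fed into a block-partitioned analysis in the eigenbasis of $\bR^*$, that forces the off-diagonal blocks of $\bU^{\dag}\bW_1\bU$ to vanish and hence the active eigenvectors of $\bR^*$ to be eigenvectors of $\bW_1$ with the allocation \eqref{eq.ZF.2}. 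Your route discards this stationarity information, so the necessary condition is not established.
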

\begin{proof}
See the Appendix.
\end{proof}

\begin{remark}
The optimal power allocation in \eqref{eq.ZF.2} is the same as standard water filling. However, a subtle difference here is the condition for an eigenmode to be active, $\lambda_i^* >0$: while the standard WF requires $\lambda_{1i} > \lambda$, the solution above requires in addition $\lambda_{2i}=0$, so that the set of active eigenmodes is generally smaller: the larger the set of eavesdropper positive eigenmodes, the smaller the set of active eigenmodes.
\end{remark}

It is gratifying to see that the standard WF over the eigenmodes of the required channel is optimal if ZF is optimal. In a sense, the optimal transmission strategy in this case is separated into two independent parts: part 1 ensures that the eavesdropper gets no signal (via the ZF) and part 2 is the standard eigenmode signaling and WF on what remains of the required channel as if the eavesdropper were not there. No new wiretap codes need to be designed.

%\newpage
%\vspace*{0.5\baselineskip}
\section{When Is the Standard Water Filling Optimal?}
\label{sec:WF}

Motivated by the fact that the transmitter may be unaware about the presence of an eavesdropper and hence uses the standard transmission on the eigenmodes of $\bW_1$ with power allocated via the water-filling (WF) algorithm, we ask the question: is it possible for this strategy to be optimal for the MIMO-WTC? The affirmative answer and conditions for this to happen are given below.
To this end, let $\bR_{WF}$ be the optimal Tx covariance matrix for transmission on $\bW_1$ only, which is given by the standard water-filling over the eigenmodes of $\bW_1$:
\begin{align}
\label{eq.WF}
\bR_{WF} &= \bU_1 \bLam^*\bU_1^{\dag}, \ \lambda_i^*= \left\{\lambda^{-1} - \lambda_{1i}^{-1}\right\}_+
\end{align}
where $\bLam^*=diag\{\lambda_i^*\}$ is a diagonal matrix of the eigenvalues of $\bR_{WF}$, and $\lambda$ is found from the total power constrain $\sum_i \lambda_i^*=P_T$.

\begin{thm}
\label{thm.WF}
The standard WF Tx covariance matrix in \eqref{eq.WF} is also optimal for the Gaussian MIMO-WTC if:

1) the eigenvectors of $\bW_1$ and $\bW_2$ are the same: $\bU_{1}=\bU_{2}$;

2) for active eigenmodes $\lambda_i^*>0$, their eigenvalues $\lambda_{1i}$ and $\lambda_{2i}$ are related as follows:
\bal
\label{eq.thm.WF.1}
\lambda_{2i} = \frac{\lambda_{1i}}{1+\alpha\lambda_{1i}} < \lambda_{1i},\ \mbox{for some}\ \alpha>0,
\eal
or, equivalently, $\lambda_{2i}^{-1}=\lambda_{1i}^{-1}+\alpha$;

3) for inactive eigenmodes $\lambda_i^*=0$, the eigenvalues $\lambda_{1i}$ and $\lambda_{2i}$ are related either as in \eqref{eq.thm.WF.1} or $\lambda_{1i} \le \lambda_{2i}$.

\end{thm}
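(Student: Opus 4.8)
The plan is to reduce the matrix problem to a bank of parallel scalar Gaussian wiretap channels via Proposition~\ref{prop.SV}, and then to show that hypotheses~2)--3) are exactly what force the per-eigenmode optimal power allocation \eqref{eq.SV.lambda*} to collapse onto the standard water-filling shape \eqref{eq.WF}. Concretely: since hypothesis~1) gives $\bU_1=\bU_2$, we have $\bW_k=\bU_1\bLam_k\bU_1^{\dag}$ for $k=1,2$, so Proposition~\ref{prop.SV} applies (with its $\bV$ read as $\bU_1$) and the optimal covariance is $\bR^*=\bU_1\bLam^*\bU_1^{\dag}$ with eigenvalues $\lam_i^*$ given by \eqref{eq.SV.lambda*}. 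Throughout I write $\mu$ for the multiplier appearing there (the symbol denoted $\lambda$ in \eqref{eq.SV.lambda*}) and reserve $\lambda$ for the reciprocal water level in \eqref{eq.WF}. It then suffices to show that, under 2)--3), formula \eqref{eq.SV.lambda*} equals $\lam_i^*=(\lambda^{-1}-\lam_{1i}^{-1})_+$ for an appropriate $\mu$--$\lambda$ correspondence and with the two formulas declaring the same eigenmodes active; the identity $\bR^*=\bR_{WF}$ then follows because both $\lambda$'s are pinned down by the common constraint $\sum_i\lam_i^*=P_T$, hence coincide.

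The algebraic collapse is the core step. For an eigenmode obeying \eqref{eq.thm.WF.1}, i.e. $\lam_{2i}^{-1}=\lam_{1i}^{-1}+\alpha$, one has $\lam_{1i}-\lam_{2i}=\alpha\lam_{1i}^2/(1+\alpha\lam_{1i})$, $\lam_{1i}+\lam_{2i}=\lam_{1i}(2+\alpha\lam_{1i})/(1+\alpha\lam_{1i})$ and $\lam_{1i}\lam_{2i}=\lam_{1i}^2/(1+\alpha\lam_{1i})$; inserting these into \eqref{eq.SV.lambda*}, the quantity under the square root reduces to the perfect square $\alpha\lam_{1i}^2(\alpha\mu+4)\big/\big(\mu(2+\alpha\lam_{1i})^2\big)$, and \eqref{eq.SV.lambda*} simplifies to
\bal
\lam_i^* = \Lb \frac{\sqrt{\alpha(\alpha\mu+4)}-\alpha\sqrt{\mu}}{2\sqrt{\mu}} - \frac{1}{\lam_{1i}} \Rb_+
\eal
which is precisely \eqref{eq.WF} with reciprocal water level $\lambda^{-1}=\big(\sqrt{\alpha(\alpha\mu+4)}-\alpha\sqrt{\mu}\big)/(2\sqrt{\mu})$, equivalently $\mu=\alpha\lambda^2/(1+\alpha\lambda)$, a strictly increasing bijection of $\lambda$ on $(0,\infty)$. (The outer $(\cdot)_+$ is consistent with the inner clip inside \eqref{eq.SV.lambda*}: the latter is active, forcing $\lam_i^*=0$, exactly when $\lam_{1i}-\lam_{2i}\le\mu$.)

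It remains to match the active sets and handle the two alternatives in hypothesis~3). The interior expression above is positive iff $\lam_{1i}>\lambda$, while eigenmode $i$ is active in $\bR^*$ iff $\lam_{1i}-\lam_{2i}>\mu$, i.e. iff $\alpha\lam_{1i}^2/(1+\alpha\lam_{1i})>\alpha\lambda^2/(1+\alpha\lambda)$, i.e. iff $\lam_{1i}>\lambda$ since $x\mapsto\alpha x^2/(1+\alpha x)$ is strictly increasing on $(0,\infty)$ --- so the two activity criteria agree. For an eigenmode inactive in \eqref{eq.WF} (so $\lam_{1i}\le\lambda$): if it obeys \eqref{eq.thm.WF.1} the above already yields $\lam_i^*=0$ in $\bR^*$; if instead $\lam_{1i}\le\lam_{2i}$, the inner clip of \eqref{eq.SV.lambda*} is active outright ($\lam_{1i}-\lam_{2i}\le0<\mu$), so again $\lam_i^*=0$. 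Conversely, every eigenmode active in \eqref{eq.WF} obeys \eqref{eq.thm.WF.1} by hypothesis~2). Hence the eigenvalue vectors of $\bR^*$ (from Proposition~\ref{prop.SV}, with $\mu=\alpha\lambda^2/(1+\alpha\lambda)$) and of $\bR_{WF}$ coincide entrywise; since both $\lambda$'s solve $\sum_i\lam_i^*=P_T$, they are equal, and therefore $\bR^*=\bR_{WF}$.

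\textbf{Main obstacle.} The genuine insight is recognizing that \eqref{eq.thm.WF.1}, $\lam_{2i}^{-1}=\lam_{1i}^{-1}+\alpha$, is exactly the relation that turns the discriminant in \eqref{eq.SV.lambda*} into a perfect square --- equivalently, that makes the marginal secrecy rate of every active eigenmode, evaluated at its water-filling power, equal to one and the same constant $\mu=\alpha\lambda^2/(1+\alpha\lambda)$, so that the water-filling vector satisfies the per-eigenmode KKT stationarity with a single multiplier. The rest is bookkeeping: executing the simplification without slip and verifying that the $\mu$--$\lambda$ correspondence and the active-set matching are mutually consistent, including the $\lam_{1i}\le\lam_{2i}$ branch of hypothesis~3).
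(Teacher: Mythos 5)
Your proof is correct, but it takes a genuinely different route from the paper. The paper works directly with the KKT conditions of the (non-convex) problem \eqref{eq3}: it rewrites stationarity as $(\bW_1^{-1}+\bR)^{-1}-(\bW_2^{-1}+\bR)^{-1}=\lambda'\bI-\bM$, diagonalizes everything in the common eigenbasis $\bU_1$, checks that $\bR_{WF}$ satisfies the resulting scalar equations under conditions 2)--3), and then concludes optimality by asserting uniqueness of the KKT solution (with a separate continuity argument for singular $\bW_k$). You instead route everything through Proposition~\ref{prop.SV}: hypothesis 1) puts the channel in the identical-right-singular-vector class, so the globally optimal covariance is already known in closed form via the parallel-channel reduction, and the theorem reduces to showing that \eqref{eq.thm.WF.1} collapses \eqref{eq.SV.lambda*} to $(\lambda^{-1}-\lambda_{1i}^{-1})_+$. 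I checked your algebra: the discriminant does reduce to $\alpha\lambda_{1i}^2(\alpha\mu+4)/(\mu(2+\alpha\lambda_{1i})^2)$, the resulting water level satisfies $\mu=\alpha\lambda^2/(1+\alpha\lambda)$ (which is exactly the quantity the paper identifies as its multiplier $\lambda'$), and the active-set matching, including the $\lambda_{1i}\le\lambda_{2i}$ branch of hypothesis 3), is consistent. What your approach buys is that global optimality is inherited from the parallel-channel result rather than from a uniqueness claim about KKT points of a non-convex program --- a claim the paper states but does not fully justify --- and it handles singular $\bW_1,\bW_2$ without a limiting argument; the price is the heavier closed-form algebra and a reliance on Proposition~\ref{prop.SV} (hence on the external parallel-channel optimality results). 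One small loose end: you should note that for active modes $\lambda_{2i}>0$ automatically under \eqref{eq.thm.WF.1}, so the prefactor $(\lambda_{1i}+\lambda_{2i})/(2\lambda_{1i}\lambda_{2i})$ in \eqref{eq.SV.lambda*} is well defined, and dispose of the degenerate case $\lambda_{1i}=\lambda_{2i}=0$ trivially.
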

\begin{proof}
We assume that $\bW_1$ and $\bW_2$ are non-singular; the singular case will be considered below (using a standard continuity argument). The KKT conditions for the optimal covariance $\bR=\bR_{WF}$, which are necessary for optimality in \eqref{eq3}, can be expressed as:
\bal
\label{eq.WF.KKT.1a}
(\bW_1^{-1}+\bR)^{-1}-(\bW_2^{-1}+\bR)^{-1}=\lambda'\bI-\bM\\
\label{eq.WF.KKT.1b}
\lambda'(\tr\bR-P_T)=0,\ \bM\bR=0 \\
\label{eq.WF.KKT.1c}
\lambda' \ge 0,\ \bM, \bR \ge 0,\ tr\bR \le P_T
\eal
where $\bM \ge 0$ is the Lagrange multiplier matrix responsible for the constraint $\bR \ge 0$ while $\lambda' \ge 0$ is the Lagrange multiplier responsible for the total power constraint $tr\bR \le P_T$. Multiplying both sides of \eqref{eq.WF.KKT.1a} by $\bU_1^{\dag}$ on the left and by $\bU_1$ on the right, one obtains:
\bal
\label{eq.WF.KKT.2}
(\bLam_1^{-1}+\bLam^*)^{-1}-(\bLam_2^{-1}+\bLam^*)^{-1} =\lambda'\bI- \bU_1^{\dag}\bM\bU_1 = \lambda'\bI-\bLam_M
\eal
where $\bLam_1, \bLam_2, \bLam_M$ are diagonal matrices of eigenvalues of $\bW_1, \bW_2, \bM$. The last equality follows from the fact that all terms but $\bU_1^{\dag}\bM\bU_1$ are diagonal so that the last term has to be diagonal too: $\bU_1^{\dag}\bM\bU_1 = \bLam_M$, i.e. $\bM$ has the same eigenvectors as $\bW_1, \bW_2, \bR$. The complementary slackness in \eqref{eq.WF.KKT.1b} implies that $\lambda_i^* \lambda_{Mi} =0$, where $\lambda_{Mi}$ is $i$-th eigenvalue of $\bM$, i.e. if $\lambda_i^*>0$ (active eigenmode) then $\lambda_{Mi}=0$ so that, after some manipulations, \eqref{eq.WF.KKT.2} can be expressed as
\bal \notag
\label{eq.WF.KKT.2a}
\lambda_i^* = \frac{1}{(\lambda_{2i}^{-1}+\lambda_i^*)^{-1} + \lambda'} - \frac{1}{\lambda_{1i}} =\lambda^{-1} - \lambda_{1i}^{-1}
\eal
for each $\lambda_i^*>0$, where the 2nd equality follows from \eqref{eq.WF}. Therefore, $\lambda =  (\lambda_{2i}^{-1}+\lambda_i^*)^{-1} + \lambda'$ and hence
\bal %\notag
%\label{eq.WF.KKT.1a3}
\lambda_i^* =(\lambda-\lambda')^{-1} - \lambda_{2i}^{-1} = \lambda^{-1} - \lambda_{1i}^{-1}
\eal
so that $\lambda_{2i}^{-1} = \lambda_{1i}^{-1} + \alpha$ with $\alpha = (\lambda-\lambda')^{-1} - \lambda^{-1}>0$ satisfies both equalities in \eqref{eq.WF.KKT.2a}.

For inactive eigenmodes $\lambda_i^*=0$, it follows from \eqref{eq.WF.KKT.2} that
\bal %\notag
\label{eq.WF.KKT.5}
\lambda_{1i} - \lambda_{2i} = \lambda' - \lambda_{Mi} \le \lambda'
\eal
Observe that this inequality is satisfied when $\lambda_{1i} \le \lambda_{2i}$ (since $\lambda' >0$). To see that it also holds under \eqref{eq.thm.WF.1}, observe that
\bal %\notag
%\label{eq.WF.KKT.1a3}
\lambda_{1i} - \lambda_{2i} = \frac{\alpha\lambda_{1i}^2}{1+\alpha\lambda_{1i}} \le \frac{\alpha\lambda^2}{1+\alpha\lambda} = \lambda'
\eal
where the inequality is due to $\lambda_{1i} \le \lambda$ (which holds for inactive eigenmodes) and the fact that $\frac{\alpha\lambda_{1i}^2}{1+\alpha\lambda_{1i}}$ is increasing in $\lambda_{1i}$. Thus, one can always select $\lambda_{Mi} \ge 0$ to satisfy \eqref{eq.WF.KKT.5} and hence the KKT conditions in \eqref{eq.WF.KKT.1a}-\eqref{eq.WF.KKT.1c} have a unique solution which also satisfies \eqref{eq.WF}. This proves the optimality of $\bR_{WF}$.

If $\bW_1$ or/and $\bW_2$ are singular, one can use a standard continuity argument: observe that $C_s$ is a continuous function of $\bW_1$ and $\bW_2$ (which follows from the continuity of $C(\bR)$ and the compactness of the constraint set $\{\bR: \bR \ge 0, \tr\bR \le P_T\}$, which is closed and bounded) and that the conditions 1-3 of Theorem 3 are also continuous. Hence, one can consider $\bW_{k\delta}= \bW_k +\delta\bI>0$, where $\delta >0$ and $k=1,2$, instead of $\bW_k$, apply Theorem 3 and then take the limit $\delta \rightarrow 0$ to establish the result for the singular case.
\end{proof}

Note that the conditions of Theorem \ref{thm.WF} do not require $\bW_1 = a \bW_2$ for some scalar $a >1$; they also allow for the WTC to be non-degraded. However, the condition in \eqref{eq.thm.WF.1} implies that larger $\lambda_{1i}$ corresponds to larger $\lambda_{2i}$, so that, over the active signaling subspace, the channel is degraded.

The 1st condition in Theorem \ref{thm.WF} implies that $\bH_1$ and $\bH_2$ have the same right singular vectors but imposes no constraints on their left singular vectors. This may represent a scenario where the transmitter is a basestation  where the legitimate channel and the eavesdropper experience the same scattering  while having their own individual scatterers around their own receivers (which determine the left singular vectors), as in Section \ref{sec:Identical Right Singular Vectors}.

\vspace*{0.5\baselineskip}
\section{When Is Isotropic Signaling Optimal?}
\label{sec:isotropic R}

In the regular MIMO channel ($\bf W_2 = 0$), the isotropic signaling (IS) is optimal (${\bf R}^\ast = a {\bf I}$) iff ${\bf W}_1 = b {\bf I}$, i.e. ${\bf W}_1$ has identical eigenvalues. Since this transmission strategy is appealing due to its low complexity (all antennas send independent data streams, no precoding, no Tx CSI and thus no feedback is required), we consider the isotropic signaling over the wire-tap MIMO channel and characterize the set of channels on which it is optimal. It turns out to be much richer than that of the regular MIMO channel.

\begin{prop}
\label{Proposition6-1}
Consider the MIMO wire-tap channel in \eqref{eq1}. The isotropic signaling is optimal, i.e. ${\bf R}^\ast = a {\bf I}$ in \eqref{eq3}, for the set of channels $\{{\bf W}_1, {\bf W}_2\}$ that satisfy all of the following:

1. ${\bf W}_1$ and ${\bf W}_2$ have the same (otherwise arbitrary) eigenvectors, ${\bf U}_1 = {\bf U}_2$.

2. ${\bf W}_1 > {\bf W}_2$ so that $\lambda_i ({\bf W}_1) = a_i^{-1} > \lambda_i({\bf W}_2) = b_i^{-1}$, where $\lambda_i ({\bf W})$ are ordered eigenvalues of ${\bf W}$.

3. Take any $b_1 >0$ and $a_1 < b_1$ and set $\lambda = (a_1 + a)^{-1} - (b_1 + a)^{-1} > 0$,

4. For $i=2...m$, take any $b_i$ such that $b_i > \lambda a^2 (1-\lambda a)^{-1} > 0$,
and set
\begin{align}
a_i = -a + (\lambda + (b_i + a)^{-1})^{-1} > 0
\end{align}

This gives the complete characterization of the set of channels for which isotropic signaling is optimal.
\end{prop}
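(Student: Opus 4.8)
The plan is to characterize optimality of isotropic signaling $\bR^*=a\bI$ (with $a=P_T/m$ to meet the power constraint with equality, which is necessary whenever $\bW_1 > \bW_2$ so that increasing power strictly helps) via the KKT conditions for \eqref{eq3}. Since $\bR^*=a\bI>0$, the Lagrange multiplier matrix $\bM$ for the constraint $\bR\ge 0$ must vanish by complementary slackness, so the stationarity condition reduces to
\bal
(\bW_1^{-1}+a\bI)^{-1} - (\bW_2^{-1}+a\bI)^{-1} = \lambda\bI
\eal
for some $\lambda \ge 0$ (strictly positive here since the left side is positive definite when $\bW_1 > \bW_2$). Because the problem is not convex in general, I would first argue that for the class of channels in the statement the KKT point is in fact the global optimum; one clean way is to invoke the fact that when $\bW_1>\bW_2>0$ share eigenvectors the objective separates into a sum of per-eigenmode terms $\ln\frac{1+g_i\lambda_i}{1+h_i\lambda_i}$ with $g_i>h_i$, each concave in $\lambda_i$ on $\lambda_i\ge 0$ (this is the parallel-channel structure already used in Section \ref{sec:Identical Right Singular Vectors}), so the KKT conditions are both necessary and sufficient.

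Next I would extract the necessity direction. Suppose $\bR^*=a\bI$ is optimal. Projecting the stationarity equation onto an eigenvector $\bu$ of $\bR^*$: since the left-hand side $(\bW_1^{-1}+a\bI)^{-1} - (\bW_2^{-1}+a\bI)^{-1}$ must equal the scalar matrix $\lambda\bI$, it commutes with everything, and in particular $\bu$ must be a common eigenvector — forcing $\bW_1$ and $\bW_2$ to share an eigenbasis (any orthonormal basis, since $\bR^*=a\bI$ is isotropic). This gives condition 1. Writing $g_i=\lambda_i(\bW_1)=a_i^{-1}$, $h_i=\lambda_i(\bW_2)=b_i^{-1}$, the scalar stationarity condition on mode $i$ becomes
\bal
\frac{1}{a_i+a} - \frac{1}{b_i+a} = \lambda,
\eal
which with $\lambda>0$ forces $a_i+a < b_i+a$, i.e. $g_i > h_i$, hence $\bW_1 > \bW_2$ (condition 2). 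Conditions 3 and 4 are then just this same equation rewritten: fixing mode $1$ pins down $\lambda$ via $\lambda=(a_1+a)^{-1}-(b_1+a)^{-1}$, and for $i\ge 2$ solving $\frac{1}{a_i+a}=\lambda+\frac{1}{b_i+a}$ for $a_i$ gives $a_i = -a + (\lambda + (b_i+a)^{-1})^{-1}$; the requirement $a_i>0$ (equivalently $g_i$ finite and positive, i.e. $\bW_1$ nonsingular on that mode) translates exactly into $b_i > \lambda a^2/(1-\lambda a)$, after a short algebraic rearrangement. So the four conditions are precisely equivalent to solvability of the KKT system with $\bR^*=a\bI$, which by the concavity/separation argument is equivalent to optimality — establishing the "complete characterization" claim.

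For the sufficiency direction I would simply run the construction: given channels satisfying 1–4, verify by direct substitution that $\bR^*=a\bI$ with $a=P_T/m$, together with $\lambda$ from condition 3 and $\bM=0$, satisfies \eqref{eq.WF.KKT.1a}–\eqref{eq.WF.KKT.1c} (complementary slackness is automatic: $\tr\bR^*=P_T$ and $\bM\bR^*=0$); then invoke sufficiency of KKT under the per-mode concavity to conclude $\bR^*$ is globally optimal. I expect the main obstacle to be the justification that KKT is sufficient here despite the non-convexity of the general MIMO-WTC problem — this needs the observation that conditions 1–2 collapse the problem onto parallel independent wiretap subchannels with $g_i>h_i$, on each of which $\ln\frac{1+g_i\lambda_i}{1+h_i\lambda_i}$ is concave and increasing in $\lambda_i\ge 0$, so the sum is concave and the feasible set is a simplex; everything else is bookkeeping to show the four stated inequalities are exactly the range of $(b_i)$ for which the KKT equations admit a valid (nonnegative-eigenvalue, nonsingular-$\bW_1$) solution.
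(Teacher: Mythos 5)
Your proof is correct, and its skeleton --- write the KKT stationarity condition at the interior point $\bR^*=a\bI$ with $\bM=0$, observe that a scalar right-hand side forces $\bW_1$ and $\bW_2$ to commute, and then read conditions 2--4 off the per-mode equation $(a_i+a)^{-1}-(b_i+a)^{-1}=\lambda$ together with $a_i>0$ --- is the same KKT-based argument the paper has in mind. The difference is one of packaging: the paper's proof is essentially a citation, delegating both directions to Theorems 1 and 2 of \cite{Loyka}, where sufficiency of the KKT conditions rests on the fact that the problem becomes (strictly) convex when $\bW_1>\bW_2$ and therefore has a unique stationary point; you instead justify sufficiency through the parallel-channel reduction of Proposition \ref{prop.SV} plus concavity of each scalar term $\ln\frac{1+g_i\lambda_i}{1+h_i\lambda_i}$ when $g_i>h_i$, which makes the argument self-contained within this paper and is arguably cleaner. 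Two small points worth tightening: (i) in the necessity direction you assert $\lambda>0$ "since the left side is positive definite when $\bW_1>\bW_2$", which is circular there --- the correct statement is that $\lambda=0$ forces $\bW_1=\bW_2$ and hence the degenerate zero-capacity case, which is excluded; (ii) the common eigenbasis pairs some eigenvalue of $\bW_1$ with some eigenvalue of $\bW_2$, and you should note that the stationarity relation is monotone in both, so the pairing is order-preserving and the \emph{ordered} eigenvalues $a_i^{-1},b_i^{-1}$ in condition 2 are indeed the ones matched. Neither issue is a real gap.
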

\begin{proof}
It is straightforward to see that any channel in the given set satisfies the conditions of Theorem 2 in \cite{Loyka} and the corresponding optimal covariance is isotropic, which proves the sufficiency. The converse (necessity)  follows from Theorem 1 in \cite{Loyka}, which requires  ${\bf W}_1 > {\bf W}_2$, so that the optimization problem is strictly convex and thus has a unique solution. For isotropic signaling to be optimal, the corresponding KKT conditions (see the proofs of Theorems 1 and 2 in \cite{Loyka}) imply the conditions stated above.
\end{proof}
\vspace*{0.5\baselineskip}

Note that the special case of this Proposition is when ${\bf W}_1$ and ${\bf W}_2$ have identical eigenvalues, as in the case of the regular MIMO channel, but, unlike the regular channel, there is also a large set of channels with distinct eigenvalues which dictates the isotropic signaling as well. It is the interplay between the legitimate user and the eavesdropper that is responsible for this phenomenon, i.e. a non-isotropic nature of the 1st channel is compensated for by a carefully-adjusted non-isotropy of the 2nd one.

Table 1 summarizes the conditions for the optimality of the ZF, the WF and the IS in the Gaussian MIMO-WTC. Clearly, the requirement for $\bW_1$ and $\bW_2$ to have the same eigenvectors is the key condition. It is satisfied when the legitimate receiver and the eavesdropper are subject to the same scattering around the base station (the transmitter) while they may have their own sets of scatterers around their own units.

\begin{table}
\caption{The conditions of optimality of the ZF, the WF and the IS in the Gaussian MIMO-WTC}
\label{tab.1}
\centering
\begin{tabular}{|c|c|}
\hline
\bfseries Strategy & \bfseries Optimality conditions\\
  \hline \hline
  % after \\: \hline or \cline{col1-col2} \cline{col3-col4} ...
  WF & $\bU_1=\bU_2$; $\lam_{1i}, \lam_{2i}$ as in Theorem 3 \\
  \hline
  ZF & $\bU_1=\bU_2$; $\lam_{1i}, \lam_{2i}$ as in Proposition 4 \\
  \hline
  IS & $\bU_1=\bU_2$; $\lam_{1i}, \lam_{2i}$ as in Proposition 5 \\
  \hline
\end{tabular}
\end{table}

\vspace*{0.5\baselineskip}
\section{Acknowledgement}
The authors would like to thank M. Urlea and K. Li for running numerical experiments and generating Fig. 1, and A. Khisti for suggesting the problem formulation in Section \ref{sec:omnidirectional}.

\appendix

\subsection{Proof of Theorem 1}

Applying the inequalities
\bal
\label{eq.ln.ineq}
x - x^2/2 \le \ln(1+x) \le x
\eal
which hold for any $x\ge 0$, to
\bal
\ln|\bI+\bW_2\bR| = \sum_i \ln(1+\lam_i(\bW_2\bR))
\eal
one obtains:
\bal
\label{eq.ineq.Cw}
C_w(\bR) \le C(\bR) \le C_w(\bR) + \frac{1}{2} \sum_i \lambda_i^2(\bW_2\bR)
\eal
from which the 1st inequality in \eqref{eq.T1.1} follows by using $\bR=\bR^*_w$; the 2nd inequality follows from the fact that $C(\bR)$ is maximized by $\bR^*$: $C_s = C(\bR^*) \ge C(\bR_w^*)$. To obtain the last inequality, we need the following lemma.

\begin{lemma}
Let $\lam_i \ge 0$ and $\sum_i \lam_i \le P_T$. Then,
\bal
\sum_i \lam_i^2 \le P_T^2
\eal
\end{lemma}
\begin{proof}
Since $\lam_i \ge 0$,
\bal
\sum_i \lam_i^2 \le \left(\sum_i \lam_i\right)^2 \le P_T^2
\eal
\end{proof}

Using this Lemma and observing that $\lam_i(\bW_2\bR) \le \lam_1(\bW_2)\lam_i(\bR)$ (see e.g. \cite{Horn-1}), one obtains:
\bal
\sum_i \lam_i^2(\bW_2\bR) \le \lam_1^2(\bW_2)\sum_i\lam_i^2(\bR) \le \lam_1^2(\bW_2)P_T^2
\eal
since $\sum_i \lam_i(\bR) \le P_T$, so that
\bal
C_s = C(\bR^*) \le C_w(\bR^*) + \lam_1^2(\bW_2)P_T^2/2 \le C_w + \lam_1^2(\bW_2)P_T^2/2
\eal
since $C_w =C_w(\bR^*_w) \ge C_w(\bR^*)$, which establishes the last inequality in \eqref{eq.T1.1}.

To establish the closed form solution for $C_w$ in \eqref{eq.T1.Cw}, consider the optimization problem in \eqref{eq.Cw}, for which the Lagrangian is
\bal
L = \ln |\bI + \bW_1 \bR| - \tr (\bW_2 \bR) -\lambda(\tr\bR-P_T) + \tr (\bM\bR)
\eal
where $\lambda \ge 0$ is a Lagrange multiplier responsible for the total power constraint and $\bM \ge \bf{0}$ is a matrix Lagrange multiplier responsible for the constraint $\bR \ge \bf{0}$. The corresponding KKT conditions (see e.g. \cite{Boyd} for a background on these conditions) are:
\bal
\label{eq.KKT.1}
\partial L /\partial\bR=(\bI+\bW_1\bR)^{-1}\bW_1-\bW_2-\lambda\bI+\bM = \bf{0} \\
\lambda(\tr\bR-P_T)=0, \bM\bR=\bf{0} \\
\lambda \ge 0, \bM, \bR \ge \bf{0}
\eal
Since the objective is concave, the corresponding optimization problem is convex, and since Slater condition holds (e.g. take $\bR=P_T\bI/2 >0, tr\bR < P_T$), the KKT conditions are sufficient for optimality \cite{Boyd}. After some manipulations, \eqref{eq.KKT.1} can be transformed to
\bal
\label{eq.KKT.1a}
&\widehat{\bR}-(\bI -\widehat{\bM})^{-1} = -\widehat{\bW}_1^{-1}\\
\label{eq.T1.tilde}
&\widehat{\bR}=\bW_{\lambda}^{1/2}\bR\bW_{\lambda}^{1/2}, \ \widehat{\bM}=\bW_{\lambda}^{-1/2}\bM\bW_{\lambda}^{-1/2},\ \widehat{\bW}_1=\bW_{\lambda}^{-1/2}\bW_1\bW_{\lambda}^{-1/2}
\eal
where we implicity assume that $\bW_1$ and $\bW_{\lam}$ are non-singular, so that $\bQ=\bW_{\lambda}^{-1}$; the singular case will be considered below. Since $\widehat{\bM}\widehat{\bR}=\bf{0}$ (which follows from $\bM\bR=\bf{0}$), these matrices commute and thus have the same eigenvectors, which, from \eqref{eq.KKT.1a}, implies that these eigenvectors are the same as those of $\widehat{\bW}_1$. Hence, all three matrices can be simultaneously diagonalized and thus \eqref{eq.KKT.1a} can be transformed to diagonal form where the diagonal entries are respective eigenvalues:
\bal
\label{eq.KKT.1a2}
\lambda_i(\widehat{\bR})-(1 -\lambda_i(\widehat{\bM}))^{-1} = -\lambda_i^{-1}(\widehat{\bW}_1)
\eal
From this and complementary slackness $\widehat{\bM}\widehat{\bR}=\bf{0}$, which implies $\lambda_i(\widehat{\bM})=0$ if $\lambda_i(\widehat{\bR})>0$ (i.e. for active eigenmodes),
\bal
\label{eq.lam.R.bar}
\lam_i(\widehat{\bR}) = (1-\lam_i^{-1}(\widehat{\bW}_1))_+
\eal
so that $\widehat{\bR} = (\bI-\widehat{\bW}_1^{-1})_+$ from which \eqref{eq.T1.2} follows. Lagrange multiplier $\lambda$ is found from the total power constraint $tr\bR \le P_T$.

The existence of the threshold power $P_T^*$ follows from the fact that $\tr\bR^*$ is monotonically decreasing in $\lambda$ so that its largest value corresponds to $\lambda\rightarrow 0$ and equals $P_T^*$. When $P_T > P_T^*$, $\lambda = 0$ and $\tr \bR^* = P_T^* < P_T$, i.e. only partial power is used (see Fig. 1 for illustration and discussion). The fact that $P_T^* = \infty$ if $\bW_2$ is singular  and $\sN(\bW_2) \nsubseteq \sN(\bW_1)$ can be established via a limiting transition: consider $\bW_{2\delta} = \bW_2 + \delta\bI >0$ instead of $\bW_2$, where $\delta >0$, evaluate $P_T^*(\delta)$ and take the limit $\lim_{\delta\rightarrow 0} P_T^*(\delta)$ ($P_T^* = \infty$ corresponds to the fact that one can always use extra power to transmit on the directions in $\sN(\bW_2)$ for which there is no leakage to the eavesdropper but positive rate to the legitimate receiver). If $\sN(\bW_2) \subseteq \sN(\bW_1)$, one can project both matrices orthogonaly to the subspace $\sN(\bW_2)$ without affecting the system performance, and perform the analysis on the projected matrices (of which the projected $\bW_2$ is non-singular).

If $\bW_{\lam}$ is singular, it follows from \eqref{eq.KKT.1} that $\lam = 0$ (inactive total power constraint) and $\bW_1$ is singular as well and, furthermore, $\sN(\bW_2) \subseteq \sN(\bW_1)$ so that both matrices can be projected, without affecting the performance, on the subspace orthogonal to $\sN(\bW_2)$, the analysis can be carried out for the projected matrices (where the projected $\bW_2$ is non-singular), and the resulting covariance can be transformed back to the original space. This is equivalent to using the (Moore-Penrose) pseudo-inverse $\bQ$ of $\bW_{\lam}$ instead the inverse in \eqref{eq.T1.2} and \eqref{eq.T1.3}. This approach can also be used to compute the threshold power $P_T^*$ if $\bW_2$ is singular and $\sN(\bW_2) \subseteq \sN(\bW_1)$. The case of singular $\bW_1$ is also addressed in Remark 1.

Finally, \eqref{eq.T1.Cw} is obtained by using \eqref{eq.T1.2} in \eqref{eq.Cw}.

\subsection{Proof of Proposition \ref{prop.isotropic}}
\label{sec:Proof Prop4-1}
The 1st equality in \eqref{eq4-3} follows from \eqref{eq3}. The 2nd equality follows from the Hadamard inequality applied to $| {\bf I}+{\bf W}_1 {\bf R}|$ in the same way as for the regular MIMO channel, and the equality is achieved  when ${\bf R}$ has the same eigenvectors as ${\bf W}_1$, ${\bf R}^* = {\bf U}_1 {\bf \Lambda}^*{\bf U}_1^{\dag}$, which maximizes the numerator and leaves the denominator unchanged. The remaining part is the optimal power allocation in \eqref{eq4-4}, which can be formulated as
\begin{eqnarray}
\label{eqF1}
C^*(\epsilon) = \mathop {\max}\limits_{\{\lambda_i\}} \sum_i \ln \frac{1+g_i \lambda_i}{1+ \epsilon \lambda_i}, \ \mbox{s.t.} \ \lambda_i \ge 0, \sum_i \lambda_i = P_T
\end{eqnarray}
This, however, represents an optimal power allocation for parallel channels which can be found in \cite{Li-10}.

The lower/upper bounds follow from the fact that $| {\bf I}+{\bf W} {\bf R}|$ is a matrix-monotone function of $\bf W$ \cite{Horn-1}, so that $| {\bf I}+{\bf W}_b {\bf R}| \ge | {\bf I}+{\bf W}_a {\bf R}|$ $\forall \ {\bf W}_b \ge {\bf W}_a \ge \bf 0$.

To establish the gap bound in \eqref{eq4-4a}, observe the following:
\bal
\label{eqF5}
\Delta C &= C^*(\epsilon_m) - C^*(\epsilon_1) = \max_{\{\lambda_i\}} \sum_{i} \ln\frac{1+g_i\lambda_i}{1+\epsilon_m\lambda_i} - \max_{\{\lambda_i\}} \sum_{i} \ln\frac{1+g_i\lambda_i}{1+\epsilon_1\lambda_i} \\
\label{eqF5b}
&\le \max_{\{\lambda_i\}} \sum_{i: g_i>\epsilon_m} \ln\frac{1+\epsilon_1\lambda_i}{1+\epsilon_m\lambda_i}\\
\label{eqF5c}
&= m_+ \ln \frac{1+\epsilon_1 P_T/m_+}{1+\epsilon_m P_T/m_+} \\
\label{eqF5d}
&\le m_+\ln \frac{\epsilon_1}{\epsilon_m}
\eal
where maximization is over the set of positive $\{\lambda_i\}$ satisfying the power constraint $\sum_i \lambda_i \le P_T$, and $m_+$ is the number of active eigenmodes. \eqref{eqF5b} follows from (easy to verify) fact that
\bal
\max_x f(x) - \max_x g(x) \le \max_x \{f(x)-g(x)\}
\eal
and the observation that the 1st maximization in \eqref{eqF5} requires $g_i>\epsilon_m$ for any $\lam_i >0$ so that imposing the same condition on the 2nd maximization results in an upper bound.
To show \eqref{eqF5c}, observe that the sum in \eqref{eqF5b} is permutation-symmetric, i.e. has the same value for $\blam = [\lambda_1,...,\lambda_{m+}]$ and any of its permutation $\pi_k\{\blam\}$, where $\pi_k$ denotes a permutation. Let $F(\blam)$ be this sum and observe further that it is concave in $\blam$ (since each term is), so that
\bal
F(\blam) = \frac{1}{m_+!}\sum_k F(\pi_k\{\blam\}) \le F\Lb \frac{1}{m_+!}\sum_k \pi_k\{\blam\} \Rb \le F\Lb \left\{\frac{P_T}{m_+}\right\}\Rb
\eal
where $\{P_T/m_+\}$ is a vector with all entries equal to $P_T/m_+$. The 1st equality is due to permutation symmetry, the 1st inequality is due to the concavity of $F(\blam)$, and last inequality is due to the power constraint and the fact that $F(\blam)$ is increasing in each $\lambda_i$. Since this holds for each $\blam$ (including optimal one), \eqref{eqF5c} follows. \eqref{eqF5d} follows from the fact that \eqref{eqF5c} is monotonically increasing in $P_T$.

\subsection{Proof of Proposition 4}

The original problem in \eqref{eq3} is not convex in general. However, since the objective is continuous, the feasible set is compact and Slater condition holds, KKT conditions are necessary for optimality \cite{Berstekas}. They take on the following form (see e.g. \cite{Loyka}):
\bal
\label{eq.ZF.KKT.1a}
\lambda\bW_1\bR = \bW_1-\bW_2+\bM-\lambda\bI\\
\label{eq.ZF.KKT.1b}
\lambda(\tr\bR-P_T)=0,\ \bM\bR=0 \\
\label{eq.ZF.KKT.1c}
\lambda \ge 0,\ \bM, \bR \ge 0,\ \tr\bR \le P_T
\eal
where $\bM \ge 0$ is the Lagrange multiplier matrix responsible for the constraint $\bR \ge 0$ while $\lambda \ge 0$ is the Lagrange multiplier responsible for the total power constraint $tr\bR \le P_T$, and we used the orthogonality condition $\bW_2\bR=0$.

To prove sufficiency, note from Proposition \ref{prop.SV} that if $\bW_1, \bW_2$ have the same eigenvectors so is $\bR$ and hence $\bM$ and also the KKT conditions are sufficient for optimality (since they have a unique solution). Hence, \eqref{eq.ZF.KKT.1a} can be transformed to a diagonal form:
\bal
%\label{eq.ZF.KKT.1a2}
\lambda\lambda_{1i}\lambda_i = \lambda_{1i}-\lambda_{2i}+ \lambda_{Mi} -\lambda
\eal
where $\lambda_i, \lambda_{Mi}$ are the eigenvalues of $\bR, \bM$. Complementary slackness in \eqref{eq.ZF.KKT.1b} gives $\lambda_i \lambda_{Mi} =0$ so that $\lambda_i>0$ (active eigenmodes) implies $\lambda_{Mi}=0$ and hence
\bal
%\label{eq.ZF.KKT.1a3}
\lambda_i = \frac{\lambda_{1i} - \lambda_{2i} - \lambda} {\lambda\lambda_{1i}} = \frac{1}{\lambda} - \frac{1}{\lambda_{1i}}
\eal
where the 2nd equality follows from the orthogonality condition $\lambda_{2i} \lambda_{i}=0$. For inactive eigenmodes $\lambda_i=0$, one obtains $\lambda_{Mi} = \lambda - \lambda_{1i} + \lambda_{2i} \ge 0$ so that $\lambda_{1i} \le \lambda + \lambda_{2i}$.

To prove the necessary part, note that complementary slackness $\bR\bM=0$ implies that $\bR\bM=\bM\bR$ and hence $\bR, \bM$ have the same eigenvectors so that the eigenvalue decompositions are: $\bR=\bU\bLam\bU^{\dag}, \bM=\bU\bLam_M\bU^{\dag}$, where diagonal matrices $\bLam, \bLam_M$ collect respective eigenvalues, and the columns of unitary matrix $\bU$ are the eigenvectors. Multiplying \eqref{eq.ZF.KKT.1a} by $\bU^{\dag}$ from the left and by $\bU$ from the right, one obtains, after some manipulations,
\bal
\label{eq.ZF.3}
\lambda\bI - \bLam_M = \widetilde{\bW}_1(\bI - \lambda\bLam) - \widetilde{\bW}_2
\eal
where $\widetilde{\bW}_k = \bU^{\dag}\bW_k\bU$. Using the orthogonality condition $\bR\bW_2 = \bW_2\bR=0$, which imply $\widetilde{\bW}_2\bLam =0$, and block-partitioned representation of $\bLam, \widetilde{\bW}_2$, one obtains:
\bal
\widetilde{\bW}_2\bLam =
\underbrace{\left(
  \begin{array}{cc}
    \bA_{11} & \bA_{12} \\
    \bA_{21} & \bA_{22} \\
  \end{array}
\right) }_{\widetilde{\bW}_2}
\underbrace{\left(
  \begin{array}{cc}
    \bLam_r & 0 \\
     0 & 0 \\
  \end{array}
\right)}_{\bLam} =
\left(
  \begin{array}{cc}
    \bA_{11}\bLam_r & 0 \\
    \bA_{21}\bLam_r & 0 \\
  \end{array}
\right) = 0
\eal
where diagonal matrix $\bLam_r>0$ collects positive eigenvalues of $\bR$, so that $\bA_{11}=0, \bA_{21}=\bA_{12}^{\dag}=0$ and hence $\widetilde{\bW}_2$ is block-diagonal: $\widetilde{\bW}_2 = diag\{0, \bA_{22}\}$. This proves that active eigenvectors of $\bR$ are also inactive eigenvectors of $\bW_2$.
Complementary slackness $\bR\bM=0$ implies $\bLam\bLam_M=0$ so that $\bLam_M$ is also block-diagonal: $\bLam_M= diag\{0, \bLam_{M(m-r)}\}$. Using these representations in \eqref{eq.ZF.3} and block-partitioned representation of $\widetilde{\bW}_1$,
\bal
\widetilde{\bW}_1 =
\left(
  \begin{array}{cc}
    \bB_{11} & \bB_{12} \\
    \bB_{21} & \bB_{22} \\
  \end{array}
\right)
\eal
one obtains
\bal \notag
\label{eq.ZF.8}
\lambda\bI - \bLam_M &=
\left(
  \begin{array}{cc}
    \bB_{11} & \bB_{12} \\
    \bB_{21} & \bB_{22} \\
  \end{array}
\right)
\left(
  \begin{array}{cc}
    \bI_r-\lambda\bLam_r & 0 \\
    0 & \bI_{m-r} \\
  \end{array}
\right)+
\left(
  \begin{array}{cc}
    0 & 0 \\
    0 & \bA_{22} \\
  \end{array}
\right) \\
&=
\left(
  \begin{array}{cc}
    \bB_{11}(\bI_r-\lambda\bLam_r) & \bB_{12} \\
    \bB_{21}(\bI_r-\lambda\bLam_r) & \bB_{22}+\bA_{22} \\
  \end{array}
\right)
\eal
so that $\bB_{12}=\bB_{21}^{\dag}=0$ and $\bB_{11}>0$ is diagonal. This proves that the active eigenvectors of $\bR$ are also active eigenvectors of $\bW_1$ (note however that $\bW_1$ can have more active eigenvectors than $\bR$ but the converse is not true). No definite statements can be made at this point about inactive eigenvectors of $\bW_1$ and active eigenvectors of $\bW_2$, e.g. they do not have to be equal. The upper left block in \eqref{eq.ZF.8} implies \eqref{eq.ZF.2}.

%\vspace*{0.5\baselineskip}

\end{document}